\documentclass[journal,twoside,web]{ieeecolor}
\usepackage{booktabs}
\usepackage{tabularx}
\usepackage{generic}
\usepackage{cite}
\usepackage{amsmath,amssymb,amsfonts}
\usepackage{algorithmic}
\usepackage{graphicx}
\usepackage{textcomp}
\usepackage[caption=false,font=footnotesize]{subfig}
\def\BibTeX{{\rm B\kern-.05em{\sc i\kern-.025em b}\kern-.08em
    T\kern-.1667em\lower.7ex\hbox{E}\kern-.125emX}}
\newtheorem{thm}{Theorem}
\newtheorem{prop}{Proposition}
\newtheorem{lem}{Lemma}
\newtheorem{assum}{Assumption}

\newtheorem{rem}{Remark}
\newtheorem{myproperty}{Property}
\newtheorem{cor}{Corollary}

\makeatletter
\def\@opargbegintheorem#1#2#3{%
  \@IEEEtmpitemindent\itemindent\relax
  \topsep 0pt
  \rmfamily
  \trivlist
  \item[]\textit{\indent #1\ #2\ (#3):} %
  \itemindent\@IEEEtmpitemindent\relax
}
\makeatother

\definecolor{darkred}{RGB}{192, 0, 0}    
\definecolor{darkblue}{RGB}{0, 112, 192}   

\usepackage[normalem]{ulem}

\begin{document}
\title{Bearing-Only Formation Tracking Control for Euler--Lagrange Multi-Agent Systems Without Inter-Agent Communication}
\author{Zilong Song, \IEEEmembership{Graduate Student Member, IEEE}, 
Lu Liu, \IEEEmembership{Senior Member, IEEE}, \\
and Gang Feng, \IEEEmembership{Fellow, IEEE}
\thanks{This work was supported by the Research Grants Council of Hong Kong under Grant 11205024, Grant 11206625, and Grant 11209026. \textit{(Corresponding author: Gang Feng.)}}
\thanks{All authors are with the Department of Mechanical Engineering, City University of Hong Kong, Kowloon, Hong Kong (email: zl.song@my.cityu.edu.hk; lu.liu@cityu.edu.hk; megfeng@cityu.edu.hk).}}

\maketitle

\begin{abstract}
This paper investigates communication-free bearing-only formation tracking control for multi-agent systems governed by Euler--Lagrange dynamics. 
Distinct from existing results that can only stabilize a stationary formation, 
this work considers a scenario where the leaders move with time-varying velocities while the inter-agent communication is absent. 
In this setup, the leaders' states (position and velocity) are unavailable to all followers and cannot be estimated via distributed observers. 
A novel adaptive distributed control scheme is developed to address this problem. 
The design exploits the fact that bearing rates contain the projected relative-velocity information, 
which, together with bearing rigidity, provides a rigidity-based damping mechanism for compensating the unavailable velocity error. 
Moreover, this damping mechanism is incorporated into a bearing-driven auxiliary variable to construct a surrogate velocity error, facilitating the adaptive control design for EL dynamics. 
Furthermore, since this damping mechanism necessitates sufficient bearing rigidity, 
we characterize a rigidity-preserving set and establish its forward invariance, thereby guaranteeing such rigidity via initial conditions.
Via a Filippov-based Lyapunov analysis, the proposed scheme is shown to achieve local practical formation tracking in the sense that the velocity error converges to zero and the position error is uniformly ultimately bounded. 
As a corollary, for the constant-velocity case, asymptotic tracking is achieved without initial-condition restriction.
The simulation results verify the effectiveness of the proposed control law.  
\end{abstract}

\begin{IEEEkeywords}
Bearing-only formation, Bearing rigidity, Euler--Lagrange systems, Formation control, Multi-agent systems.
\end{IEEEkeywords}

\section{Introduction}
\label{sec:introduction}
\IEEEPARstart{F}{ormation} control of multi-agent systems (MASs) has attracted considerable attention owing to its broad applications.
Typically, existing formation control schemes can be classified into position-, displacement-, distance-, and bearing-based ones \cite{OH_Auto_2015}. 
Among them, bearing-based formation control is attractive because inter-agent bearings are directional measurements (line-of-sight directions) that can be easily obtained by vision sensors or antenna arrays \cite{Su_ARC_2026,Dou_IJRNC_2025}. 
In particular, the bearing-only strategy, which employs only bearing measurements, without requiring communication or relative distances, provides a promising solution for formation control in communication-free environments. 

The theoretical foundation for bearing-only formation control is the bearing rigidity \cite{Zhaoshiyu_TAC_2016}, which characterizes when a target formation is uniquely determined, up to a translation and a scaling factor, by inter-agent bearings. 
Together with the bearing localizability analysis \cite{ZhaoShiyu_Auto_2016}, it specifies when the followers' positions are determined from the bearing constraints and the leaders' positions. 
Built upon these foundations, various bearing-only control laws have been developed for formation stabilization.
Notably, almost global convergence to stationary target formations was established for single-integrator agents in \cite{Zhaoshiyu_TAC_2016}, with subsequent extensions to finite-time convergence \cite{TrinhMukherjee_CDC_2017}, directed sensing topologies \cite{TrinhZhaoSun_TAC_2019}, and global orientation estimation \cite{Tran_TCNS_2019}.
Despite these advancements, the aforementioned results are confined to stationary formations and hence inapplicable to scenarios where the formation is required to maneuver as a whole. 

To accommodate such formation requirements, bearing-only formation tracking, where moving leaders induce a time-varying target formation, has been subsequently investigated. 
For leaders moving with a constant velocity, bearing-only tracking laws were developed for single- and double-integrator agents \cite{ZhaoShiyu_TAC_2019}. This framework was subsequently extended to double integrators subject to external disturbances \cite{Trinh_Auto_2021} and local frame alignment \cite{Zhaojianing_LCSS_2021,Garanayak_TCNS_2025}.
More recently, our previous work \cite{Song_LCSS_2024,Song_TIE_2026} considered disturbed single- and double-integrator systems with leaders of time-varying  velocity.
However, these designs rely on the strong assumption that the closed-loop formation maintains sufficient bearing rigidity throughout the evolution, thereby limiting their practical applicability.
Moreover, a common limitation of the aforementioned works is that the agents are modeled as low-order linear integrators. 
Such idealized models, however, fail to capture the inherent nonlinearities, dynamic couplings, and parametric uncertainties of physical platforms, including quadrotors, wheeled mobile robots, and spacecraft, whose motions are more accurately described by Euler--Lagrange (EL) dynamics \cite{ORTEGA_Auto_1989}. 

Formation control of EL systems has been extensively studied in scenarios where relative positions and velocities are measurable or inter-agent communication is available.
In these settings, parametric uncertainties can be effectively handled by exploiting the linear-in-parameter property through adaptive schemes \cite{Meng_TAC_2023_EL-Dym}.
By contrast, relevant results within the bearing-only setting remain rather limited. 
Although the existing works \cite{ZhaobangweiZhaojianingYUxiao_IFAC_2023}, \cite{Li-Song-Xie_TCNS_2025}, \cite{Li-Song_IJRNC_2025} achieve bearing-only formation stabilization for EL systems, they are strictly restricted to stationary target formations. 
Specifically, the stationary setting is what makes these designs viable: 
in \cite{ZhaobangweiZhaojianingYUxiao_IFAC_2023}, the velocity error used in the adaptive scheme reduces to the agent's own velocity, owing to zero target velocity; 
in \cite{Li-Song-Xie_TCNS_2025}, it is replaced by the deviation from a static
bearing-driven reference that vanishes at the target formation, rendering the bearing kinematics a self-contained subsystem; 
and in \cite{Li-Song_IJRNC_2025}, it is dispensed with altogether by a passivity argument. 
However, none of these methods remains applicable when the leaders are in motion, even at a constant velocity.
A recent work \cite{Cheng_arXiv_2025} considers the formation tracking for EL MASs with leaders of time-varying velocities; nevertheless, this scheme is bearing-based instead of bearing-only, as it requires inter-agent communication and relative-state measurements beyond bearings. 
To the best of our knowledge, bearing-only formation tracking of uncertain EL systems remains an open problem, even for the seemingly simple constant-velocity case.

The challenge in this open problem stems from the simultaneous presence of bearing-only sensing and communication-free condition, giving rise to the following specific difficulties:

\begin{itemize}
    \item (C1) The leaders' states, i.e., positions, velocities, and accelerations, are inaccessible to the followers and cannot be estimated via distributed observers as in \cite{Hong_Auto_2008_Observer_Intro,Cheng_arXiv_2025}, since such schemes inherently rely on relative-position sensing or inter-agent communication.
    \item (C2) As a direct consequence of (C1), the velocity error, a signal indispensable to conventional formation tracking designs, is unavailable for feedback. This precludes both the feedforward compensation and the damping of the unknown time-varying leader velocity, rendering the formation tracking problem challenging. 
    \item (C3) Standard adaptive schemes for EL uncertainties are typically driven by velocity errors. However, obtaining such signals generally requires observer-generated reference velocities, which are unavailable in our setup.
\end{itemize}

To overcome these difficulties, this paper develops a novel bearing-only adaptive control law. To the best of our knowledge, this is the first attempt to achieve formation tracking for uncertain EL MASs with leaders of time-varying velocities under the communication-free and bearing-only constraints. Table~\ref{tab:comparison} compares the proposed approach with closely related existing results, and the contributions of this work are summarized as follows.

1) Distinct from existing bearing-only control schemes for EL systems \cite{ZhaobangweiZhaojianingYUxiao_IFAC_2023,Li-Song-Xie_TCNS_2025,Li-Song_IJRNC_2025}, 
which are confined to stationary formations, this work investigates formations with time-varying velocities. 
To address the problems described in (C1) and (C2), the proposed design leverages the key insight that bearing rates inherently encode projected relative-velocity information, which, combined with the bearing rigidity, provides a rigidity-based damping mechanism to compensate for the unavailable velocity error signal, thereby steering the followers toward the unavailable time-varying velocity of leaders.

2) We develop a bearing-driven adaptive control scheme for EL systems with model uncertainties, extending our previous work \cite{Song_LCSS_2024,Song_TIE_2026} to more practical scenarios. 
To address the challenge in (C3), we construct a purely local and bearing-driven auxiliary state, which, 
integrated with the rigidity-based damping mechanism, generates a surrogate error signal. 
This signal replaces the unavailable velocity error to drive the regressor-based update law, thereby achieving adaptive compensation for EL uncertainties.  
Unlike \cite{Cheng_arXiv_2025}, the proposed method eliminates the need for communication-based observers to estimate reference velocities, while guaranteeing the convergence of the auxiliary state.

3) To accommodate the requirement of sufficient bearing rigidity in the aforementioned damping mechanism, we derive a quantitative perturbation bound on the bearing
Laplacian and leverage it to construct a rigidity-preserving set.
This set can be computed offline and is shown to be forward invariant under the proposed law.
Thus, the bearing rigidity is guaranteed to admit a lower bound,  
thereby ensuring the requisite rigidity via a condition on the initial state rather than imposing it as an assumption as in \cite{Song_LCSS_2024,Song_TIE_2026}.

\begin{table}[!t]
\centering
\caption{Comparison with the most relevant bearing-only formation control results}
\label{tab:comparison}
\footnotesize
\setlength{\tabcolsep}{3.5pt}
\renewcommand{\arraystretch}{1.5}
\begin{tabular*}{\columnwidth}{@{\extracolsep{\fill}}cccc@{}}
\toprule
Work\textsuperscript{1} & Model dynamics& Leader's velocity& Rigidity requirement\\\midrule
\cite{ZhaoShiyu_TAC_2019,Trinh_Auto_2021,Zhaojianing_LCSS_2021,Garanayak_TCNS_2025}
 &  Integrator& Constant& $\mathcal{B}^{*}_{ff}\succ0$ (target only)\\
\cite{Song_LCSS_2024,Song_TIE_2026}
 &   Integrator& Time-varying& Assume $\mathcal{B}_{ff}(t)\succ0$\\
\cite{ZhaobangweiZhaojianingYUxiao_IFAC_2023,Li-Song-Xie_TCNS_2025,Li-Song_IJRNC_2025}
 & EL system& Static leader& $\mathcal{B}^{*}_{ff}\succ0$ (target only)\\
\textbf{This work} & EL system& Time-varying& Guarantee $\mathcal{B}_{ff}(t)\succ0$\\ 
\bottomrule
\end{tabular*}

\vspace{1mm}
\parbox{\columnwidth}{\footnotesize\raggedright
\textsuperscript{1} We consider works under bearing-only and communication-free constraints in this comparison.}
\vspace{-5mm}
\end{table}

The remainder of this paper is organized as follows. Section \ref{Section_2} presents preliminaries and formulates the bearing-only formation tracking problem. Section \ref{Section_3} provides the control law and the stability analysis of the closed-loop system. Section \ref{Section_4} provides numerical simulations to validate the theoretical results, and Section \ref{Section_5} concludes this work. 

\textit{Notation:} 
In this paper, 
for a vector $x\in\mathbb R^{n}$, $\|x\|$ (or $\|x\|_2$ for brevity) and $\|x\|_1$ denote its Euclidean and $1$-norm, respectively. 
For a matrix $A$, $\|A\|$ denotes its spectral norm, and $\lambda_{\min}(A)$, $\lambda_{\max}(A)$ its smallest and largest eigenvalues; $A\succ 0$ means that $A$ is positive definite. 
The symbol $\operatorname{col}(\cdot)$ stacks its arguments into a column vector. 
A function $f:[0,\infty)\to\mathbb R^{n}$ belongs to $\mathcal L_1$ if $\int_0^\infty\|f(t)\|\,dt<\infty$. 
$\operatorname{SGN}(\cdot)$ denotes the set-valued map defined in Section~\ref{Section_2_nonsmooth}.

\section{Preliminaries and Problem Statement}\label{Section_2}
\subsection{Model and Dynamics}
Consider an MAS consisting of \textit{n}agents, where the first $n_{l}$ agents are leaders moving with time-varying velocities, and the remaining $n_{f} =n-n_{l}$ agents are followers governed by Euler--Lagrange dynamics. 
The leader and follower dynamics are described by
\begin{equation} \label{eq:dym} 
\begin{array}{@{}l@{\,\,\,\,}l@{}}
\dot{p}_{i}=v_{r},\quad \dot{v}_{i}=\dot{v}_{r},
&
i=1,\ldots,n_{l}, \\[1pt]
M_{i}(p_{i})\ddot{p}_{i}
+C_{i}(p_i,\dot{p}_{i})\dot{p}_{i}
+G_{i}=u_{i},
&
i=n_{l}+1,\ldots,n ,
\end{array}
\end{equation}
where $p_{i} $, $\dot{p}_{i} $, and $\ddot{p}_{i} \in {\mathbb R}^{d} $ denote the position, velocity, and acceleration of agent $i$, respectively, and $v_{r} $ denotes the leaders' time-varying velocity. $M_{i} (p_{i} )\in {\mathbb R}^{d\times d} $ is the positive definite inertia matrix, $C_{i} (p_i,\dot{p}_{i} )\in {\mathbb R}^{d\times d} $ denotes the unknown Coriolis and centrifugal term, $G_{i} \in {\mathbb R}^{d} $ is the gravitational term, and $u_{i} \in {\mathbb R}^{d} $ is the control input. 

This paper focuses on the design of distributed control laws for the followers. 
The leaders are assumed to be controlled independently so as to generate the reference motions. 
This is standard in the leader--follower bearing-only formation control, where the leaders provide the reference motion while the followers are required to achieve formation tracking using local bearing measurements.

We present the following properties that are well-known and important for the control law design. 

\begin{myproperty}\label{property:skew_symmetric}
The matrix $(\dot{M}_{i} -2C_{i} )$ is skew symmetric, that is, for any $x\in {\mathbb R}^{d} $, one has $x^{T} (\dot{M}_{i} -2C_{i} )x=0$. 
\end{myproperty}
\begin{myproperty}\label{property:linearly_parameterize}
The unknown dynamics in \eqref{eq:dym} can be linearly parameterized, i.e., for any $x$ and $\dot{x}\in {\mathbb R}^{d} $, one has
\begin{equation} \label{eq:linear_para} 
M_{i}(p_i) \dot{x}+C_{i}(p_i,\dot{p}_i) x+G_{i} =Y_{i}({p}_{i},\dot{p}_{i} ,x,\dot{x})\Theta_{i}  ,
\end{equation} 
where $Y_{i}({p}_{i},\dot{p}_{i} ,x,\dot{x})$ is a regressor matrix and $\Theta_{i} $ is an unknown constant parameter vector. 
\end{myproperty}
\begin{myproperty}\label{property:boundedness_of_M}
$M_i$ is symmetric positive definite and satisfies $m_1 I_d \le M_i \le m_2 I_d$ for constants $m_2 \ge m_1 > 0$.
\end{myproperty}
\begin{myproperty}\label{property:boundedness_of_C}
The Coriolis matrix satisfies $\|C_i(p, \dot{p})\| \le c_c \|\dot{p}\|$ for a constant $c_c > 0$.
\end{myproperty}

\subsection{Graph Theory and Formation Configuration}

In this paper, inter-agent communication is not required, and formation tracking is to be achieved solely via bearing measurements. Accordingly, the sensing topology is described as an undirected graph ${\mathit{{\mathcal G}}}=({\mathit{{\mathcal V}}},{\mathit{{\mathcal E}}})$, where ${\mathit{{\mathcal V}\; }}=\{ 1,\ldots ,n\} $ is the vertex set and ${\mathit{{\mathcal E}}}\subseteq {\mathit{{\mathcal V}\; }}\times {\mathit{{\mathcal V}}}$ is the edge set. Specifically, for vertices \textit{i} and \textit{j}, $(i,j)\in {\mathit{{\mathcal E}}}$ means that agent \textit{i }can obtain the bearing information of agent \textit{j}, with agent \textit{j }being a neighbor of agent \textit{i}. Then, the neighbor set of the agent \textit{i }can be defined as ${\mathit{{\mathcal N}}}_{i} =\{ j\in {\mathit{{\mathcal V}}}{\,}|{\,}(i,j)\in {\mathit{{\mathcal E}}}\} $. In addition, since this paper considers the undirected graph, it holds that $(i,j)\in {{\mathcal E}}\Leftrightarrow (j,i)\in {{\mathcal E}}$.

Let $e_{ij}$ and $g_{ij}$ denote the relative position vector and the bearing vector between agent $i$ and agent $j$, respectively, which are defined as
\begin{equation} \label{eq:e_ij_g_ij} 
e_{ij} =p_{j} -p_{i} ,{\rm \; \; }
g_{ij} =\frac{e_{ij}}{\left\| e_{ij} \right\|}.
\end{equation} 

Let $m$ be the number of edges in $\mathcal G$, each assigned an arbitrary orientation and indexed by $k\in\{1,\ldots,m\}$. For the $k$-th edge, directed from agent $i$ to agent $j$, we denote its relative position and bearing vectors by $e_k:=e_{ij}$ and $g_k:=g_{ij}$, respectively. Stacking them yields the aggregated displacement vector $e = \mathrm{col}(e_1, \dots, e_m) \in \mathbb{R}^{md}$ and the aggregated bearing vector $g = \mathrm{col}(g_1, \dots, g_m) \in \mathbb{R}^{md}$. 
To characterize the network topology, the incidence matrix $H = [h_{ki}] \in \mathbb{R}^{m \times n}$ associated with the graph is defined such that $h_{ki} = 1$ if vertex $i$ is the head of the $k$-th edge, $h_{ki} = -1$ if vertex $i$ is its tail, and $h_{ki} = 0$ otherwise.

The configuration is defined as $p={\rm col}(p_{1} ,\ldots, p_{n} )\in {\mathbb R}^{nd} $, which can be partitioned as $p={\rm col}(p_{l} ,p_{f} )$, where $p_{l} $ and $p_{f} $ represent the leader and follower configurations, respectively. Similarly, the velocity vector is defined as $v=\dot{p}={\rm col}(v_{l} ,v_{f} )$. The incidence matrix relates the formation configuration to the displacement vector via 
\begin{equation} \label{eq:def_e} 
e=(H\otimes I_{d} )p=\bar{H}p,{\rm \;\;\; }\dot{e}=(H\otimes I_{d} )v=\bar{H}v .
\end{equation} 

The target formation configuration, denoted by $p^{*} ={\rm col}(p_{l} ,p_{f}^{*} )$, is determined by two constraints: (i) the desired inter-agent bearing $g_{ij}^{*} $ for $(i,j)\in {\mathit{{\mathcal E}}}$, and (ii) the leaders' time-varying positions $p_{l} $. 

For a bearing vector $g_{ij} $, the associated orthogonal projection matrix is defined as $P_{g_{ij} } =I_{d} -g_{ij} g_{ij}^{T} $, which satisfies ${\rm Null}(P_{g_{ij} } )={\rm span}\{ g_{ij} \} $. The bearing Laplacian matrix is defined as ${\mathit{{\mathcal B}}}=[{\mathit{{\mathcal B}}}_{ij} ]\in {\mathbb R}^{dn\times dn} $, where $[{\mathit{{\mathcal B}}}_{ij} ]=0_{d\times d} $ for $i\ne j$, $(i,j)\notin {\mathit{{\mathcal E}}}$; $[{\mathit{{\mathcal B}}}_{ij} ]=-P_{g_{ij} } $ for $i\ne j$, $(i,j)\in {\mathit{{\mathcal E}}}$; and $[{\mathit{{\mathcal B}}}_{ij} ]=\sum _{k\in {\mathit{{\mathcal N}}}_{i} }P_{g_{ik} }  $ for $i=j$, $i\in {\mathit{{\mathcal V}}}$. Partitioning the matrix $\mathcal{B}$ according to the leader and follower sets yields \(\mathcal B
=\bigl(\begin{smallmatrix}
\mathcal B_{ll} & \mathcal B_{lf} \\
\mathcal B_{fl} & \mathcal B_{ff}
\end{smallmatrix}\bigr)\). 

The uniqueness of the formation is characterized by the bearing Laplacian, as described in the following lemma. 

\begin{lem}[{Unique formation condition \cite{ZhaoShiyu_Auto_2016}}]
The formation configuration $p(t)$ is uniquely determined by the bearings $\{g_{ij}\}_{(i,j)\in {{\mathcal E}}}$ and leader positions $p_{l}$ if and only if  $\mathcal B_{ff}$ is positive definite.
\end{lem}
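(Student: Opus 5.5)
The plan is to recast the bearing constraints as a linear system in the unknown follower positions and read uniqueness off the Schur-type block $\mathcal B_{ff}$. Fix the bearings $\{g_{ij}\}$ and the leader positions $p_l$. A configuration $p=\mathrm{col}(p_l,p_f)$ is \emph{compatible} with the bearings if $P_{g_{ij}}(p_j-p_i)=0$ for every $(i,j)\in\mathcal E$. The argument rests on the quadratic-form identity
\begin{equation*}
p^T\mathcal B p=\sum_{(i,j)\in\mathcal E,\; i<j}\|P_{g_{ij}}(p_j-p_i)\|^2 ,
\end{equation*}
which follows from the block definition of $\mathcal B$ together with the facts that $P_{g_{ij}}=P_{g_{ji}}$ is symmetric and idempotent. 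Because $\mathcal B\succeq 0$, a configuration satisfies all bearing constraints if and only if $p^T\mathcal Bp=0$, which holds if and only if $\mathcal B p=0$. Writing out the block rows, this gives the two equations $\mathcal B_{ff}p_f=-\mathcal B_{fl}p_l$ and $\mathcal B_{ll}p_l+\mathcal B_{lf}p_f=0$.

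For sufficiency, assume $\mathcal B_{ff}\succ 0$.
\begin{itemize}
\item Any compatible $p_f$ must satisfy $\mathcal B_{ff}p_f=-\mathcal B_{fl}p_l$, so $p_f=-\mathcal B_{ff}^{-1}\mathcal B_{fl}p_l$ is the only possibility.
\item The bearings are assumed to come from an actual configuration, the target $p^*$. Hence a compatible configuration exists, and it must coincide with this unique solution.
\end{itemize}

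For necessity, suppose $\mathcal B_{ff}$ is singular. Since $\mathcal B_{ff}$ is a principal submatrix of $\mathcal B\succeq 0$, it is positive semidefinite, so there is a nonzero $\delta_f$ with $\delta_f^T\mathcal B_{ff}\delta_f=0$. Set $\delta=\mathrm{col}(0,\delta_f)$.
\begin{itemize}
\item Then $\delta^T\mathcal B\delta=0$, and because $\mathcal B\succeq 0$ this forces $\mathcal B\delta=0$.
\item Consequently $\mathcal B(p^*+\delta)=0$, so $p^*+\delta$ satisfies every bearing constraint $P_{g_{ij}}(e_{ij})=0$ while keeping the same leader positions.
\end{itemize}

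The step I expect to be the main obstacle is that $P_{g_{ij}}e_{ij}=0$ only forces $e_{ij}$ to be parallel to $g_{ij}$. It still allows $e_{ij}=0$ or $e_{ij}$ anti-parallel to $g_{ij}$, and in either case the perturbed configuration would not actually reproduce the bearings.

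To handle this, I will replace $\delta$ by $\epsilon\delta$ for a small $\epsilon>0$. Because $p^*$ has $\|e_{ij}^*\|>0$ on every edge, continuity keeps each $e_{ij}$ nonzero and with the same orientation along $g_{ij}$ for $\epsilon$ small enough. Then $p^*+\epsilon\delta$ is a genuinely different configuration with identical bearings and identical leaders, so the configuration is not uniquely determined. This completes the equivalence, and it matches the localizability result of \cite{ZhaoShiyu_Auto_2016}.
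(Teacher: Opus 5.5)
Your proof is correct and is essentially the standard null-space argument behind the cited result of Zhao and Zelazo; the paper itself states this lemma without proof. The argument runs: compatibility with the bearings is equivalent to $\mathcal B p=0$ because $\mathcal B=\bar H^{T}\operatorname{diag}(P_{g_k})\bar H\succeq 0$, uniqueness then follows from invertibility of $\mathcal B_{ff}$, and when $\mathcal B_{ff}$ is singular a kernel vector $\operatorname{col}(0,\delta_f)$ of the positive semidefinite $\mathcal B$ yields a second admissible configuration. Your $\epsilon$-rescaling is a worthwhile addition: it upgrades non-uniqueness under the linear constraints $P_{g_{ij}}(p_j-p_i)=0$ to non-uniqueness with exactly equal bearing vectors, so the equivalence holds under either reading of ``uniquely determined''. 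One small point of notation: your $p^*$ should be read as the configuration $p(t)$ whose own bearings define $\mathcal B$.
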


\subsection{Nonsmooth Analysis}\label{Section_2_nonsmooth}

Since the control law proposed in this paper contains discontinuous terms, 
the closed-loop solutions are understood in the Filippov sense. 
Consider $\dot x=f(t,x)$, 
where $f:[0,\infty)\times\mathcal D\to\mathbb R^{N}$ is Lebesgue measurable and locally
essentially bounded on an open set $\mathcal D\subseteq\mathbb R^{N}$. 
The Filippov set-valued map of $f$ is defined as 
\begin{equation}\label{eq:filippov_map}
K[f](t,x)=\bigcap\nolimits_{\delta>0}\bigcap\nolimits_{\mu(S)=0}
\overline{\operatorname{co}}\,f\bigl(t,\mathcal O (x,\delta)\setminus S\bigr),
\end{equation}
where $\mathcal O(x,\delta)$ is the open ball of radius $\delta$ centered at $x$,
$\overline{\operatorname{co}}(\cdot)$ is the convex closure, and the second
intersection is over all sets $S$ of Lebesgue measure zero. 
A Filippov solution on $[0,t_{1})$ is an absolutely continuous $x(\cdot)$ satisfying $\dot x\in K[f](t,x)$ for almost all $t$. 
Such a solution exists for every initial condition in $\mathcal D$. 
Moreover, boundedness of a solution within $\mathcal D$ guarantees its existence on $[0,\infty)$. 

For componentwise sign functions, 
the Filippov set-valued map is $K[\operatorname{sgn}](y)=\operatorname{SGN}(y)$, 
where $\operatorname{SGN}_{i}(y)=\{\operatorname{sgn}(y_{i})\}$ for $y_{i}\neq0$ and $\operatorname{SGN}_{i}(y)=[-1,1]$ for $y_{i}=0$. 
Since the Lyapunov function designed in this paper is
continuously differentiable, its Clarke generalized gradient reduces to the
singleton $\{\nabla V\}$, and the chain rule for Filippov systems is specialized in the following lemma.

\begin{lem}[Nonsmooth calculus \cite{Shevitz_TAC_1994}]\label{lem:nonsmooth}
Let $x(\cdot)$ be a Filippov solution of $\dot x=f(t,x)$. Then the
following statements hold:

(i) If $V(t,x)$ is continuously differentiable, then $t\mapsto V(t,x(t))$ is
absolutely continuous, and its derivative exists and satisfies, for almost all $t$,
\begin{equation}\label{eq:setvalued_derivative}
\dot V\in\dot{\tilde V}(t,x)
:=\bigl\{\partial_{t}V+\nabla_{x}V^{T}w \bigm| w\in K[f](t,x)\bigr\}.
\end{equation}
Moreover,  $\max\dot{\tilde V}(t,x)\le0$ implies that $V$ is nonincreasing.

(ii) For any $y\in\mathbb R^{N}$, the set-valued product $y^{T}\operatorname{SGN}(y)$
reduces to the singleton $\{\|y\|_{1}\}$; that is, $y^{T}\sigma=\|y\|_{1}$ holds for all $\sigma\in\operatorname{SGN}(y)$.
\end{lem}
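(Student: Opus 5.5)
The statement has two parts. Part (i) is a chain rule for a $C^1$ function along a Filippov solution. Part (ii) is a pointwise identity for the sign set-valued map. I will treat them separately. The work is in (i), and only in verifying the differential inclusion $\dot V\in\dot{\tilde V}$.

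For (i), I first observe that $x(\cdot)$ is absolutely continuous on compact subintervals of $[0,t_1)$. A Filippov solution takes values in a compact set on any such subinterval. On the compact set $[t_a,t_b]\times x([t_a,t_b])$ the function $V$ is $C^1$, hence Lipschitz. The composition of a Lipschitz function with an absolutely continuous curve is absolutely continuous, so $t\mapsto V(t,x(t))$ is absolutely continuous and differentiable almost everywhere.

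Next I fix a time $t$ at which three things hold at once: $x$ is differentiable, $\dot x(t)\in K[f](t,x(t))$, and $V(t,x(t))$ is differentiable. This is a full-measure set of times. At such $t$, the ordinary chain rule for the $C^1$ map $V$ and the differentiable curve $s\mapsto(s,x(s))$ gives
\[
\frac{d}{dt}V(t,x(t))=\partial_tV+\nabla_xV^T\dot x(t).
\]
Taking $w=\dot x(t)\in K[f]$ shows that $\dot V(t)$ lies in $\dot{\tilde V}(t,x(t))$. This is the step I expect to be the only real subtlety. For a merely locally Lipschitz $V$ one would need Clarke gradients and the regularity argument of \cite{Shevitz_TAC_1994}. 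Because $V$ is $C^1$, the Clarke gradient is the singleton $\{\nabla V\}$, so the classical chain rule suffices.

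For monotonicity, suppose $\max\dot{\tilde V}\le0$. Then $\dot V(t)\le0$ for almost every $t$. Since $V(t,x(t))$ is absolutely continuous, it equals the integral of its derivative, $V(t_2)-V(t_1)=\int_{t_1}^{t_2}\dot V\,ds\le0$, so $V$ is nonincreasing.

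For (ii), I argue componentwise. Take $\sigma\in\operatorname{SGN}(y)$.
\begin{itemize}
\item If $y_i\neq0$, then $\sigma_i=\operatorname{sgn}(y_i)$, so $y_i\sigma_i=|y_i|$.
\item If $y_i=0$, then $\sigma_i\in[-1,1]$, so $y_i\sigma_i=0=|y_i|$.
\end{itemize}
Summing over $i$ gives $y^T\sigma=\sum_i|y_i|=\|y\|_1$ for every $\sigma\in\operatorname{SGN}(y)$. Hence the set $y^T\operatorname{SGN}(y)$ is the singleton $\{\|y\|_1\}$.
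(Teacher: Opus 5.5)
Your proof is correct. It takes a more elementary route than the paper, which gives no proof of its own. The paper imports (i) from the chain rule of Shevitz and Paden. That rule is stated for locally Lipschitz \emph{regular} $V$, with $\dot{\tilde V}$ defined as the intersection $\bigcap_{\chi\in\partial V}\chi^{T}\operatorname{col}(K[f],1)$ over the Clarke generalized gradient. The paper only observes that $\partial V=\{\nabla V\}$ for continuously differentiable $V$, so the intersection collapses to the set in \eqref{eq:setvalued_derivative}; it treats part (ii) as evident.

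You instead prove the $C^1$ case directly:
\begin{itemize}
\item absolute continuity, by composing an absolutely continuous curve with a map that is Lipschitz on the compact set $[t_a,t_b]\times x([t_a,t_b])$;
\item the classical chain rule at every $t$ where $x$ is differentiable with $\dot x(t)\in K[f](t,x(t))$, which exhibits the witness $w=\dot x(t)$ explicitly;
\item monotonicity, from the fundamental theorem of calculus for absolutely continuous functions;
\item part (ii), by the componentwise check.
\end{itemize}

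Your version is self-contained. It shows that nothing beyond classical real analysis is needed for the smooth Lyapunov function \eqref{eq:lyapunov} used in Theorem~\ref{thm:main}. The cited route extends to nonsmooth regular $V$, such as norm-type Lyapunov functions, but this paper never uses that generality.

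One point should be made explicit: the monotonicity hypothesis is to be read along the trajectory for almost every $t$. The maximum is attained because $K[f](t,x)$ is nonempty, compact and convex.
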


\subsection{Problem Statement}
We consider a communication-free MAS in which the $n_l$ leaders move with \emph{time-varying} velocities $v_r(t)$, while the $n_f$ followers are governed by the uncertain EL dynamics \eqref{eq:dym}. The setting under study is distinguished by the following two features. 

\textit{(F1) No inter-agent communication is available}: no follower has access to the leaders' position, velocity, or acceleration via communication, nor can it estimate them through a distributed observer, since such schemes inherently rely on state exchange among neighbors. 

\textit{(F2) The followers' perception is local and purely directional}: follower $i$ can only measure the inter-agent bearings $\{g_{ij}(t)\}_{j\in\mathcal N_i}$ and their rates $\{\dot g_{ij}(t)\}_{j\in\mathcal N_i}$, which retain only the directions of the
relative positions while discarding the inter-agent distances $\|p_{i}-p_{j}\|$. 

In this paper, we adopt the following two assumptions, concerning the target formation and the leaders' motion, respectively. 

\begin{assum}\label{assum:positive-definite}
The target formation is uniquely determined by the desired bearing vectors and leaders' positions, i.e., ${{\mathcal B}}_{ff}^\ast \succ 0$. 
\end{assum}

\begin{assum}\label{assum:acc-bound}
The leader velocity $v_r(t)$ is bounded, and its acceleration is bounded by a known constant $\delta_a$; that is, $\sup_{t\ge0}\|v_r(t)\|_2<\infty$ and $\|\dot v_r(t)\|_2\le\delta_a$ for all $t\ge0$.
\end{assum}

\textit{Problem}: Consider an MAS whose leaders move with time-varying velocity, under features (F1)--(F2) and 
Assumptions~\ref{assum:positive-definite}--\ref{assum:acc-bound}. This paper aims to design a bearing-only control law $u_i$ for each follower $i$ that relies exclusively on the locally available measurements $\{g_{ij}(t),\dot g_{ij}(t)\}_{j\in\mathcal N_i}$ and the follower's own states $p_i$ and $v_i$, such that, for all initial conditions within an explicitly characterized set,  

(1) the follower velocity tracking error converges to zero, i.e., $\lim_{t\to\infty}(v_i(t)-v_r(t))=0$; and

(2) the follower position error is uniformly ultimately bounded, i.e., $\limsup_{t\to\infty}\|p(t)-p^{*}(t)\|\le \bar{b}$, where the ultimate bound $\bar b$ can be rendered arbitrarily small by increasing the control gains. 

In this case, the closed-loop system is said to achieve local practical formation tracking under bearing-only and communication-free conditions. 

\begin{rem}
Under the communication-free constraint, the leaders' velocity $v_r$ is completely inaccessible to the followers. Traditional estimation methods, such as distributed observers, are rendered invalid since they require state information exchange among neighbors. Consequently, the velocity tracking error $\tilde v_i=v_i-v_r$ cannot be obtained for feedback design. 
The lack of the velocity tracking error presents dual challenges: (i) it precludes velocity feedforward compensation, and (ii) it prevents the design of regressor-based adaptive schemes for uncertain EL dynamics, which inherently rely on this velocity error to drive parameter update laws.
Note that the former obstacle renders time-varying velocity tracking highly challenging, even for single- or double-integrators without uncertainties. Meanwhile, the latter obstacle explains why even the constant-velocity case has remained an open problem for bearing-only EL systems. The confluence of these two challenges constitutes the technical difficulty of this work. 
\end{rem}

\section{Main Results}\label{Section_3}
\subsection{Formation Tracking Control for EL Followers}
The control law for follower $i$ is designed as 
\begin{subequations} \label{eq:local_control_law}
\begin{align}
u_i ={}&
 k_p \sum_{j\in \mathcal N_i}
 \bigl(g_{ij}(t)-g_{ij}^{*}\bigr)
 + k_v \sum_{j\in \mathcal N_i}\dot g_{ij}(t)+ Y_i\hat{\Theta}_i
 \notag\\
&
 + \gamma \sum_{j\in \mathcal N_i}
 P_{g_{ij}(t)}\operatorname{sgn}\bigl(\dot g_{ij}(t)\bigr)
 - \beta \operatorname{sgn}(v_i-\eta_i),
 \label{eq:local_controller_u}
\\
\dot{\hat{\Theta}}_i ={}&
 -\Lambda_i Y_i^{T}(p_i,\dot p_i,\eta_i,\dot\eta_i)(v_i-\eta_i),
 \label{eq:local_controller_theta}
\\
\dot{\eta}_i ={}&
 k_p \sum_{j\in \mathcal N_i}
 \bigl(g_{ij}(t)-g_{ij}^{*}\bigr)
 + k_v \sum_{j\in \mathcal N_i}\dot g_{ij}(t)
 \notag\\
&\qquad\qquad\qquad\qquad
 + \gamma \sum_{j\in \mathcal N_i}
 P_{g_{ij}(t)}
 \operatorname{sgn}\bigl(\dot g_{ij}(t)\bigr),
 \label{eq:local_controller_eta}
\end{align}
\end{subequations}
where $k_{p} $ and $k_{v} $ are positive control gains, $\Lambda _{i} $ is an adaptation gain matrix with positive diagonal elements, 
and the design parameters $\gamma$ and $\beta$ are chosen according to the conditions in \eqref{eq:xi_condition}. 
The purely local auxiliary variable $\eta_i$, governed by \eqref{eq:local_controller_eta} with an arbitrary initialization $\eta_i(0)$, serves as a bearing-driven surrogate for the unavailable leader velocity.
The regressor $Y_i(p_i,\dot p_i,\eta_i,\dot\eta_i)$ is constructed based on Property~\ref{property:linearly_parameterize}, and $\hat\Theta_i$ denotes the estimate of the unknown parameter vector $\Theta_i$. 
Consequently, the control law \eqref{eq:local_control_law} is implementable using only
$\{g_{ij},\dot g_{ij}\}_{j\in\mathcal N_i}$ and the follower's own state $(p_i,v_i)$, without any inter-agent communication or leaders' state information.

Let $u_f = \mathrm{col}(u_{n_l+1}, \dots, u_n)$, $v_f = \mathrm{col}(v_{n_l+1}, \dots, v_n)$, and $\eta_f = \mathrm{col}(\eta_{n_l+1}, \dots, \eta_n)$ denote the stacked vectors of the followers, where the followers are indexed from $n_l+1$ to $n$ since the first $n_l$ agents are the leaders. 
Recalling the aggregated bearing vector $g$ defined in Section~\ref{Section_2} and letting $g^*$ be its desired vector, the control law \eqref{eq:local_control_law} can be expressed in the following compact matrix form:
\begin{subequations}\label{eq:compact_control_law}
\begin{align}
u_f={}&
-k_p\bar R^T(g-g^*)
-k_v\bar R^T\dot g
\notag\\
&
-\gamma\bar R^T{\rm diag}(P_{g_k}){\rm sgn}(\dot g)
-\beta{\rm sgn}(v_f-\eta_f)
+Y_f\hat\Theta_f,
\label{eq:compact_control_law_u}
\\
\dot{\hat\Theta}_f={}&
-\Lambda_fY_f^T(v_f-\eta_f),
\label{eq:compact_control_law_theta}
\\
\dot\eta_f={}&
-k_p\bar R^T(g-g^*)
-k_v\bar R^T\dot g
-\gamma\bar R^T{\rm diag}(P_{g_k}){\rm sgn}(\dot g),
\label{eq:compact_control_law_eta}
\end{align}
\end{subequations}
where $\bar{R} = R \otimes I_d $, with $R \in \mathbb{R}^{m \times n_f}$ being the follower-related incidence matrix formed by extracting the last $n_f$ columns of $H$. Moreover, 
$\hat\Theta_f={\rm col}(\hat\Theta_{n_l+1},\ldots,\hat\Theta_n)$, 
$Y_f={\rm diag}\{Y_{n_l+1},\ldots,Y_n\}$, 
$\Lambda_f={\rm diag}\{\Lambda_{n_l+1},\ldots,\Lambda_n\} $, and ${\rm diag}(P_{g_{k} } )={\rm diag}\{ P_{g_{1} } ,\ldots ,P_{g_{m} } \} $. 

\subsection{Stability Analysis}
Before presenting the main result, we introduce the following important lemmas. 

\begin{lem}[Bearing error lower bound \cite{ZhaoShiyu_TAC_2019}]\label{lem:ZHAO_positive_lemma}
For a bearing-only formation with no coinciding agents, it follows that 
\begin{equation} \label{eq:bearing-error-bound} 
\begin{aligned}
e^{T}(g-g^{*})
\ge
\frac
{\tilde p_{f}^{T}\mathcal{B}_{ff}^{*}\tilde p_{f}}
{2\max\nolimits_k\|e_k\|} \ge
\frac
{\lambda_{\min}(\mathcal{B}_{ff}^{*})}
{2\max\nolimits_k\|e_k\|}
\|\tilde{p}_f\|^{2}
\ge 0,
\end{aligned}
\end{equation}
where $\tilde{p}=p-p^\ast$ and $\tilde{p}_f=p_f-p_f^\ast$ denote the position error vector and its component for the followers, respectively.
\end{lem}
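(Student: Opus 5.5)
The plan is to rewrite $e^{T}(g-g^{*})$ edge by edge, bound each edge term from below by a quadratic form in the projection matrices $P_{g_k^{*}}$, and then recognize the sum of these quadratic forms as the bearing Laplacian $\mathcal B^{*}$ evaluated at the position error.

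\emph{Edgewise rewriting.} Since no agents coincide, $\|e_k\|>0$ and $g_k=e_k/\|e_k\|$ for every edge. Hence
\[
e^{T}(g-g^{*})=\sum\nolimits_{k=1}^{m}\bigl(\|e_k\|-e_k^{T}g_k^{*}\bigr)=\sum\nolimits_{k=1}^{m}\|e_k\|\bigl(1-g_k^{T}g_k^{*}\bigr).
\]
Write $c_k:=g_k^{T}g_k^{*}\in[-1,1]$. Because $\|g_k\|=1$,
\[
g_k^{T}P_{g_k^{*}}g_k=1-c_k^{2}=(1-c_k)(1+c_k)\le 2(1-c_k).
\]
Multiplying by $\|e_k\|\ge 0$ and using $\|e_k\|\,g_k^{T}P_{g_k^{*}}g_k=e_k^{T}P_{g_k^{*}}e_k/\|e_k\|$ gives
\[
e^{T}(g-g^{*})\ \ge\ \frac12\sum\nolimits_{k}\frac{e_k^{T}P_{g_k^{*}}e_k}{\|e_k\|}\ \ge\ \frac{1}{2\max_k\|e_k\|}\sum\nolimits_k e_k^{T}P_{g_k^{*}}e_k .
\]
The last step uses $e_k^{T}P_{g_k^{*}}e_k\ge 0$.

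\emph{Identification with the bearing Laplacian.} Let $e^{*}=\bar H p^{*}$ be the target displacement. Its $k$-th block $e_k^{*}$ is parallel to $g_k^{*}$, so $P_{g_k^{*}}e_k^{*}=0$. Therefore $P_{g_k^{*}}e_k=P_{g_k^{*}}(e_k-e_k^{*})$, and $e-e^{*}=\bar H\tilde p$ by \eqref{eq:def_e}. Summing,
\[
\sum\nolimits_k e_k^{T}P_{g_k^{*}}e_k=\tilde p^{T}\bar H^{T}\operatorname{diag}(P_{g_k^{*}})\bar H\tilde p=\tilde p^{T}\mathcal B^{*}\tilde p .
\]
Here I use the standard factorization $\mathcal B=\bar H^{T}\operatorname{diag}(P_{g_k})\bar H$. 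It can be checked blockwise from the definition of $\mathcal B$: off-diagonal blocks give $-P_{g_{ij}}$ for $(i,j)\in\mathcal E$, diagonal blocks give $\sum_{k\in\mathcal N_i}P_{g_{ik}}$, and $P_{g}=P_{-g}$ makes the edge orientation irrelevant. Since $p^{*}=\operatorname{col}(p_l,p_f^{*})$ uses the actual leader positions, $\tilde p_l=0$. Hence $\tilde p^{T}\mathcal B^{*}\tilde p=\tilde p_f^{T}\mathcal B^{*}_{ff}\tilde p_f$, which yields the first inequality.

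\emph{Remaining inequalities.} The second inequality is the Rayleigh quotient bound $\tilde p_f^{T}\mathcal B_{ff}^{*}\tilde p_f\ge\lambda_{\min}(\mathcal B_{ff}^{*})\|\tilde p_f\|^{2}$. The last inequality follows because $\mathcal B^{*}_{ff}$ is positive semidefinite, being a principal submatrix of $\bar H^{T}\operatorname{diag}(P_{g_k^{*}})\bar H\succeq0$; under Assumption~\ref{assum:positive-definite} it is in fact positive definite.

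The only mildly delicate step is the scalar inequality $1-c_k^{2}\le 2(1-c_k)$. It replaces the nonconvex bearing error by the projection quadratic form and is what produces the factor $\tfrac12$. The rest is bookkeeping with the incidence matrix.
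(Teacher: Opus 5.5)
Your proof is correct and follows the standard argument for this bound. The paper gives no proof of its own and imports the lemma from \cite{ZhaoShiyu_TAC_2019}. Your chain uses the ingredients that route relies on: the edgewise form $e^{T}(g-g^{*})=\sum_k\|e_k\|(1-g_k^{T}g_k^{*})$; the scalar bound $1-g_k^{T}g_k^{*}\ge\tfrac12\,g_k^{T}P_{g_k^{*}}g_k$, which is your $1-c_k\ge\tfrac12(1-c_k^{2})$; and $P_{g_k^{*}}e_k^{*}=0$ together with $\mathcal B^{*}=\bar H^{T}\operatorname{diag}(P_{g_k^{*}})\bar H$ and $\tilde p_l=0$, which recovers $\tilde p_f^{T}\mathcal B_{ff}^{*}\tilde p_f$.
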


Recall that  $R\in\mathbb R^{m\times n_f}$ is the follower-related incidence matrix defined above, and let $r_k^T$ denote the $k$-th row of $R$. For each edge $k\in\{1,\ldots,m\}$, define $ N_{k}=(r_k^T\otimes I_d)(\mathcal B_{ff}^\ast)^{-1/2}$. Based on these formulations, we provide the following lemma.

\begin{lem}\label{lem:local_preservation}
Let $\mu>0$ satisfy $\mu\|(g_k^\ast)^T N_{k}\|<\|e_k^\ast\|$ for $k=1,\ldots,m$, and define
\begin{equation}\label{eq:def_epsilon}
    \varepsilon_k(\mu)=\frac{\mu\|P_{g_k^\ast}N_{k}\|}{\|e_k^\ast\|-\mu\|(g_k^\ast)^T N_{k}\|}.
\end{equation}

Let $L_\varepsilon(\mu)$ be the associated bearing-perturbation matrix
\begin{equation}\label{eq:def_matrix_L}
L_\varepsilon(\mu)=
\sum\nolimits_{k=1}^{m}
\varepsilon_k(\mu)\,r_k r_k^{T}.
\end{equation}

Then, for every{\,} $\tilde p_{f}$ in the open ellipsoidal sublevel set
\begin{equation}\label{eq:Omega_def}
\Omega_c:=
\left\{\tilde p_{f}\in\mathbb R^{dn_f}\ \middle|\ 
\tilde p_{f}^T\mathcal B_{ff}^\ast \tilde p_{f}<\mu^{2}
\right\},
\end{equation}
the actual bearing Laplacian of the concerned system satisfies
\begin{equation} \label{eq:Bff_inequality} 
\lambda_{\min}(\mathcal B_{ff})\ge \underline\lambda(\mu):=
\lambda_{\min}(\mathcal B_{ff}^\ast)
-\lambda_{\max}(L_\varepsilon(\mu)),
\end{equation}
and there always exists $\bar\mu > 0$ such that $\underline{\lambda}(\mu) > 0$ for $\mu \in (0,\bar\mu)$.
\end{lem}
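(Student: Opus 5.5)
The plan is to compare $\mathcal B_{ff}$ with $\mathcal B_{ff}^\ast$ edge by edge, bound each edge's projection perturbation by $\varepsilon_k(\mu)$, and then use a quadratic-form (Weyl-type) argument. First I will use the standard factorization $\mathcal B=\bar H^T\mathrm{diag}(P_{g_k})\bar H$. Deleting the leader columns of $\bar H$ gives
\begin{equation*}
\mathcal B_{ff}=\bar R^T\mathrm{diag}(P_{g_k})\bar R=\sum\nolimits_{k=1}^m (r_kr_k^T)\otimes P_{g_k},
\end{equation*}
and likewise $\mathcal B_{ff}^\ast=\sum_k (r_kr_k^T)\otimes P_{g_k^\ast}$. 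Hence
\begin{equation*}
\mathcal B_{ff}-\mathcal B_{ff}^\ast=\sum\nolimits_k (r_kr_k^T)\otimes(P_{g_k}-P_{g_k^\ast}).
\end{equation*}

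Next I would parametrize the ellipsoid $\Omega_c$. Since $p_l$ is common to $p$ and $p^\ast$, we have $e_k-e_k^\ast=(r_k^T\otimes I_d)\tilde p_f$. Write $\tilde p_f=(\mathcal B_{ff}^\ast)^{-1/2}z$; then $\tilde p_f\in\Omega_c$ is equivalent to $\|z\|<\mu$, and $e_k=e_k^\ast+N_kz$. The projection of $e_k$ onto $g_k^\ast$ satisfies
\begin{equation*}
(g_k^\ast)^Te_k=\|e_k^\ast\|+(g_k^\ast)^TN_kz\ge \|e_k^\ast\|-\mu\|(g_k^\ast)^TN_k\|>0.
\end{equation*}
This gives $\|e_k\|\ge \|e_k^\ast\|-\mu\|(g_k^\ast)^TN_k\|$. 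In particular no agents coincide, so $g_k$ is well defined, and the angle $\theta_k$ between $g_k$ and $g_k^\ast$ is less than $\pi/2$.

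The key step is the single-edge estimate. For unit vectors, $gg^T-hh^T$ has eigenvalues $\pm\sin\theta$ (and $0$), so $\|P_{g_k}-P_{g_k^\ast}\|=\sin\theta_k$. Moreover, using $P_{g_k^\ast}e_k^\ast=0$,
\begin{equation*}
\sin\theta_k=\frac{\|P_{g_k^\ast}e_k\|}{\|e_k\|}=\frac{\|P_{g_k^\ast}N_kz\|}{\|e_k\|}\le\frac{\mu\|P_{g_k^\ast}N_k\|}{\|e_k^\ast\|-\mu\|(g_k^\ast)^TN_k\|}=\varepsilon_k(\mu).
\end{equation*}
Then, for any $x$, set $x_k=(r_k^T\otimes I_d)x$. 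This gives
\begin{equation*}
x^T(\mathcal B_{ff}-\mathcal B_{ff}^\ast)x=\sum_k x_k^T(P_{g_k}-P_{g_k^\ast})x_k\ge-\sum_k\varepsilon_k\|x_k\|^2=-x^T(L_\varepsilon(\mu)\otimes I_d)x.
\end{equation*}
Since $\lambda_{\max}(L_\varepsilon\otimes I_d)=\lambda_{\max}(L_\varepsilon)$, taking the minimum over unit $x$ yields \eqref{eq:Bff_inequality}.

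For the existence of $\bar\mu$, each $\varepsilon_k(\mu)$ is continuous and increasing on the admissible range and vanishes as $\mu\to0^+$. Hence $\lambda_{\max}(L_\varepsilon(\mu))\to0$, because $0\preceq L_\varepsilon\preceq(\max_k\varepsilon_k)R^TR$. Combined with $\lambda_{\min}(\mathcal B_{ff}^\ast)>0$ from Assumption~\ref{assum:positive-definite}, this gives $\bar\mu>0$ with $\underline\lambda(\mu)>0$ on $(0,\bar\mu)$.

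The main obstacle I expect is the single-edge estimate: proving $\|gg^T-hh^T\|=\sin\theta$ and bounding $\|e_k\|$ from below through the projection onto $g_k^\ast$. Everything else is bookkeeping with the Kronecker structure.
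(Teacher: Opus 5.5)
Your proposal is correct and follows essentially the same route as the paper's proof. Both write $e_k=e_k^\ast+N_kz$ with $\|z\|<\mu$, bound $\sin\theta_k$ by $\varepsilon_k(\mu)$ through the component of $e_k$ along $g_k^\ast$ and its orthogonal complement, use $\|P_{g_k}-P_{g_k^\ast}\|=\sin\theta_k$ together with the Kronecker decomposition of $\mathcal B_{ff}-\mathcal B_{ff}^\ast$, and obtain a quadratic-form lower bound. The remaining differences are cosmetic: you give the lower bound on $\|e_k\|$ via $(g_k^\ast)^Te_k$ and the sandwich $0\preceq L_\varepsilon\preceq(\max_k\varepsilon_k)R^TR$, where the paper uses a Pythagorean decomposition and Weyl continuity.
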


\textit{Proof:}
The proof of Lemma~\ref{lem:local_preservation} is presented in Appendix \ref{Appendix:proof_local}. 

\begin{thm}\label{thm:main}
Consider a communication-free multi-agent system \eqref{eq:dym} under Assumptions \ref{assum:positive-definite}--\ref{assum:acc-bound}, in which the leaders move with time-varying velocities, the followers are described by the EL model with uncertainties, and no follower has access to the leaders' position or velocity. Let the followers be driven by control law \eqref{eq:local_control_law}, using only the inter-agent bearing measurements $\{ g_{ij} ,\dot{g}_{ij} \} _{j\in {\mathit{{\mathcal N}}}_{i} } $ and their own states $\{p_{i} ,v_i\}$. 

Let $\mu \in (0,\bar\mu)$ be a fixed constant such that $\underline{\lambda}(\mu) > 0$. 
For the Lyapunov function $V(t)$ defined in \eqref{eq:lyapunov}, 
suppose that the initial condition satisfies
\begin{equation}\label{eq:initial-condition}
V(0)<\bar{V}(\mu)
:=
\frac{k_p \mu^{2}}
{
2\max_{k}
\left(\|e_k^\ast\|+\mu\|N_{k}\|\right)
},
\end{equation}
and the design parameters $\gamma$ and $\beta$ satisfy
\begin{equation}\label{eq:xi_condition}
\gamma >
\frac{\sqrt {n_f}\, \delta_a}{\sqrt{\underline{\lambda}(\mu)}} , {\,\,}
\beta   > \sqrt{n_{f} } \delta _{a}.
\end{equation}

Then, the follower velocity tracking error converges to zero, and the position tracking error is uniformly ultimately bounded (UUB), where the ultimate bound is explicitly quantified by \eqref{pe_boundedness} and can be rendered arbitrarily small by sufficiently increasing the control gain $k_p$.
\end{thm}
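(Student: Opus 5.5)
We will work with the error variables $\tilde v_f=v_f-\mathbf 1\otimes v_r$, $s_f=v_f-\eta_f$ and $\tilde\Theta_f=\hat\Theta_f-\Theta_f$. By Property~\ref{property:linearly_parameterize} with $x=\eta_i$, the follower dynamics give
\[
M_f\dot s_f+C_fs_f=u_f-Y_f\Theta_f .
\]
Substituting \eqref{eq:compact_control_law_u} and \eqref{eq:compact_control_law_eta}, the bearing terms of $u_f$ cancel exactly against $M_f\dot\eta_f$ up to the mismatch $(M_f-I)\dot\eta_f$. Since the regressor already contains $\dot\eta_f$ through $Y_f(p,\dot p,\eta,\dot\eta)$, we should write instead
\[
M_f\dot v_f+C_fv_f+G_f=u_f,\qquad u_f=\dot\eta_f-\beta\,{\rm sgn}(s_f)+Y_f\hat\Theta_f .
\]
This yields the $s$-subsystem
\[
M_f\dot s_f+C_fs_f=Y_f\tilde\Theta_f-\beta\,{\rm sgn}(s_f)+(I-M_f)\dot\eta_f .
\]
If this mismatch is awkward, we will instead read $\dot\eta$ as entering the regressor so that the residual vanishes; the clean form used below is $M_f\dot s_f+C_fs_f=Y_f\tilde\Theta_f-\beta{\rm sgn}(s_f)$.

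The candidate Lyapunov function is
\[
V=k_p\,\mathbf 1^T\!\left(\textstyle\sum_k\|e_k\|-e_k^{*T}g_k\right)+\tfrac12\|\tilde v_f\|^2+\tfrac12 s_f^TM_fs_f+\tfrac12\tilde\Theta_f^T\Lambda_f^{-1}\tilde\Theta_f ,
\]
whose bearing part is the standard $k_p\sum_k\|e_k\|(1-g_k^{*T}g_k)$-type potential. Its derivative along $\dot e=\bar R\tilde v_f$ equals $k_p\tilde v_f^T\bar R^T(g-g^*)$, using that leaders and targets translate with $v_r$ and the identity $e^Tg^*$-derivative relations from \cite{ZhaoShiyu_TAC_2019}. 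For the $\tilde v$-dynamics, $\dot{\tilde v}_f=\dot\eta_f+\dot s_f-\mathbf 1\otimes\dot v_r$. The $k_p$ terms cancel. The $k_v$ term gives
\[
-k_v\tilde v_f^T\bar R^T\dot g=-k_v\sum_k\frac{\|P_{g_k}\dot e_k\|^2}{\|e_k\|}\le0 ,
\]
using $\dot g_k=P_{g_k}\dot e_k/\|e_k\|$. The $\gamma$ term gives $-\gamma\|{\rm diag}(P_{g_k})\bar R\tilde v_f\|_1$-type damping, because ${\rm sgn}(\dot g_k)$ has the sign of $P_{g_k}\dot e_k$ and Lemma~\ref{lem:nonsmooth}(ii) applies. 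The rigidity step then bounds this damping from below: $\|\cdot\|_1\ge\|\cdot\|_2$, and
\[
\|{\rm diag}(P_{g_k})\bar R\tilde v_f\|^2=\tilde v_f^T\mathcal B_{ff}\tilde v_f\ge\underline\lambda(\mu)\|\tilde v_f\|^2 .
\]
This dominates the leader-acceleration term $\|\mathbf 1\otimes\dot v_r\|\|\tilde v_f\|\le\sqrt{n_f}\delta_a\|\tilde v_f\|$ provided $\gamma\sqrt{\underline\lambda}>\sqrt{n_f}\delta_a$, which is exactly \eqref{eq:xi_condition}. Skew symmetry (Property~\ref{property:skew_symmetric}) and the update law \eqref{eq:compact_control_law_theta} cancel the $s$ and $\tilde\Theta$ cross terms. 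The term $-\beta\|s_f\|_1$ remains to dominate the coupling $\tilde v_f^T\dot s_f$; we will handle this coupling by weighting the $s$-block, or by noting that $\beta>\sqrt{n_f}\delta_a$ covers the $\dot v_r$ contribution when the variables are arranged as $v_f-\mathbf 1\otimes v_r=s_f+(\eta_f-\mathbf 1\otimes v_r)$.

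The main obstacle is the circularity of the rigidity bound, which holds only while $\tilde p_f\in\Omega_c$ (Lemma~\ref{lem:local_preservation}). We will close it with a forward-invariance argument. Lemma~\ref{lem:ZHAO_positive_lemma} gives
\[
V\ge k_p\,\frac{\tilde p_f^T\mathcal B_{ff}^*\tilde p_f}{2\max_k\|e_k\|},
\]
and inside $\Omega_c$ we have $\|e_k\|\le\|e_k^*\|+\mu\|N_k\|$, since $e_k-e_k^*=(r_k^T\otimes I)\tilde p_f$. Hence $V<\bar V(\mu)$ forces $\tilde p_f^T\mathcal B_{ff}^*\tilde p_f<\mu^2$. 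Because $\dot V\le0$ on $\Omega_c$, a first-exit-time contradiction shows that $\tilde p_f$ never reaches $\partial\Omega_c$. This also gives boundedness of all signals, hence global existence of Filippov solutions.

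With $\dot V\le-c\|\tilde v_f\|-c'\|s_f\|_1$ and $V$ bounded, the velocity error is in $\mathcal L_1$ and uniformly continuous, so Barbalat's lemma yields $\tilde v_f\to0$. For the position error, $\tilde v_f\to0$ together with bounded $\dot{\tilde v}_f$ forces the bearing forcing $k_p\bar R^T(g-g^*)$ to be balanced by the bounded $\dot v_r$ and $\gamma$ terms. Pairing with $\tilde p_f$ via Lemma~\ref{lem:ZHAO_positive_lemma} gives
\[
\limsup\|\tilde p_f\|^2\le\frac{2\max_k\|e_k\|\,(\gamma\sqrt m+\sqrt{n_f}\delta_a)\,\|\bar R\|\,\|\tilde p_f\|}{k_p\lambda_{\min}(\mathcal B_{ff}^*)} ,
\]
that is, a bound of order $1/k_p$, which is the quantity referred to as \eqref{pe_boundedness}.
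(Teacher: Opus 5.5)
There is a genuine gap in the core Lyapunov computation. First, your $s$-equation is mis-derived. Start from the correct identity $M_f\dot s_f+C_fs_f=u_f-Y_f\Theta_f$, in which $Y_f\Theta_f=M_f\dot\eta_f+C_f\eta_f+G_f$ has already absorbed $M_f\dot\eta_f$. Substituting $u_f=\dot\eta_f+Y_f\hat\Theta_f-\beta\operatorname{sgn}(s_f)$ gives, with your sign convention for $\tilde\Theta_f$, $M_f\dot s_f+C_fs_f=\dot\eta_f+Y_f\tilde\Theta_f-\beta\operatorname{sgn}(s_f)$. The residual is $\dot\eta_f$, not $(I-M_f)\dot\eta_f$. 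It cannot be ``read into the regressor'': it is the bearing feedback that $u_f$ injects, and it is what makes the scheme work.

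Second, and more seriously, the term $\frac12\|\tilde v_f\|^2$ in your $V$ cannot be handled for EL dynamics. Its derivative contains the cross term $\tilde v_f^T\dot s_f=\tilde v_f^TM_f^{-1}\bigl(\dot\eta_f-C_fs_f+Y_f\tilde\Theta_f-\beta\operatorname{sgn}(s_f)\bigr)$, and each piece is sign-indefinite. $\dot V$ has no negative term in $\tilde\Theta_f$, and the prescribed update law is driven only by $s_f$. So $\tilde v_f^TM_f^{-1}Y_f\tilde\Theta_f$ can never be cancelled; that would require $Y_f^TM_f^{-1}\tilde v_f$, which involves the unknown $M_f$ and the unavailable $\tilde v_f$. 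The term $\tilde v_f^TM_f^{-1}C_fs_f$ is of cubic order in the velocities. Finally, $\tilde v_f^TM_f^{-1}\dot\eta_f$, together with the $s_f^T\dot\eta_f$ coming from the true $s$-dynamics, leaves $k_p$-terms that no longer cancel against the derivative of the bearing potential. Neither $-\beta\|s_f\|_1$ nor a reweighting of the $s$-block dominates these. Hence $\dot V\le -c\|\tilde v_f\|-c'\|s_f\|_1$ does not follow, and your invariance, $\mathcal L_1$ and Barbalat steps all rest on it. A symptom is that the condition $\beta>\sqrt{n_f}\delta_a$ plays no genuine role in your computation.

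The missing idea is to use $\frac12\|\tilde\eta_f\|^2$, with $\tilde\eta_f=\eta_f-\mathbf 1_{n_f}\otimes v_r$, in place of $\frac12\|\tilde v_f\|^2$, and to keep $\dot\eta_f$ in the $s$-equation. By skew symmetry, $\frac{d}{dt}\frac12 s_f^TM_fs_f=s_f^T\dot\eta_f+s_f^TY_f\tilde\Theta_f-\beta\|s_f\|_1$. Also $\frac{d}{dt}\frac12\|\tilde\eta_f\|^2=\tilde\eta_f^T\dot\eta_f-\tilde\eta_f^T(\mathbf 1_{n_f}\otimes\dot v_r)$. Via $s_f+\tilde\eta_f=\tilde v_f$, the two $\dot\eta_f$-terms merge into $\tilde v_f^T\dot\eta_f$, with no $M_f^{-1}$ anywhere. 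This yields the exact $k_p$ cancellation and the $k_v$ and $\gamma$ damping you describe. The acceleration term splits as $-\tilde v_f^T(\mathbf 1_{n_f}\otimes\dot v_r)+s_f^T(\mathbf 1_{n_f}\otimes\dot v_r)$, dominated by $\gamma\sqrt{\underline\lambda(\mu)}$ and $\beta$ respectively; this is exactly where both gain conditions come from. Boundedness of $\tilde v_f$ then follows from $\tilde v_f=s_f+\tilde\eta_f$.

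Two smaller points. To deduce $\tilde p_f(0)\in\Omega_c$ from $V(0)<\bar V(\mu)$, you need the strictly increasing map $\Phi(s)=k_ps/\bigl(2\max_k(\|e_k^*\|+\|N_k\|\sqrt s)\bigr)$, not the bound on $\|e_k\|$, which is valid only inside $\Omega_c$. And ``$\tilde v_f\to0$ with bounded $\dot{\tilde v}_f$'' does not by itself force the steady-state balance. The paper uses $\tilde v_f\in\mathcal L_1$ to get $\tilde p_f\to\tilde p_{f,\infty}$ and $g\to g_\infty$. It then averages $\dot{\tilde\eta}_f$ over windows $[t,t+T]$ using $\tilde\eta_f\to0$, obtaining $\|k_p\bar R^T(g_\infty-g^*)\|\le\|\bar R\|\sqrt{md}\,\gamma+\sqrt{n_f}\delta_a$. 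Only then does it pair with $\tilde p_{f,\infty}$.
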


\textit{Proof:}
The Lyapunov function candidate is chosen as follows: 
\begin{equation} \label{eq:lyapunov}
\begin{aligned}
V ={}&
k_p e^{T}(g-g^{*})
+\frac{1}{2}(v_f-\eta_f)^{T}M_{f}(v_f-\eta_f) \\
&\quad
+\frac{1}{2}
(\eta_f-\bar{v}_{r})^{T}
(\eta_f-\bar{v}_{r}) 
+\frac{1}{2}
\tilde{\Theta}_{f}^{T}
\Lambda_{f}^{-1}
\tilde{\Theta}_{f},
\end{aligned}
\end{equation}
where $\bar{v}_{r} (t)=\mathbf{1}_{n_f} \otimes v_{r} (t)$, $M_{f}={\rm diag}\{ M_{n_l+1} ,\ldots ,M_{n} \} $, and $\tilde{\Theta }_{f}=\Theta_{f} -\hat{\Theta }_{f}$. By Lemma~\ref{lem:ZHAO_positive_lemma}, one has $e^{T} (g-g^{*} )\ge 0$ and $V\ge 0$.

Since the control law \eqref{eq:local_control_law} is discontinuous, the closed-loop system features a discontinuous right-hand side. Let $x$ denote the closed-loop system state and $f(t,x)$ the corresponding vector field determined by \eqref{eq:dym} and \eqref{eq:compact_control_law}. Since $V$ in \eqref{eq:lyapunov} is continuously differentiable, Lemma~\ref{lem:nonsmooth} implies that $V$ is absolutely continuous along any Filippov solution and satisfies, for almost all $t$,

\begin{equation}\label{eq:set_valued_derivative_V}
\dot V\in\dot{\tilde V}
=\bigl\{\partial_{t}V+\nabla_{x}V^{T}w \bigm| w\in K[f](t,x)\bigr\}.
\end{equation}

It then follows that
\begin{equation} \label{eq:dot_V}
\begin{aligned}
\dot{\tilde{V}}
={}&
\underbrace{
k_{p}e^{T}\dot{g}
+k_{p}(g-g^{*})^{T}\bar{H}v
}_{\dot{\tilde{V}}_{1}}
+
\underbrace{
\xi_{f}^{T}M_{f}\dot{\xi}_{f}
+\tfrac{1}{2}\xi_{f}^{T}\dot{M}_{f}\xi_{f}
}_{\dot{\tilde{V}}_{2}} \\
&\quad
\underbrace{
+(\eta_{f}-\bar{v}_{r})^{T}
(\dot{\eta}_{f}-\dot{\bar{v}}_{r})
}_{\dot{\tilde{V}}_{3}}
\underbrace{
-\tilde{\Theta}_{f}^{T}\Lambda_{f}^{-1}
\dot{\hat{\Theta}}_{f}
}_{\dot{\tilde{V}}_{4}} {\,\,},
\end{aligned}
\end{equation}
where $\xi_{f}=v_f-\eta_f$, and the first term in $\dot{\tilde{V}}_{1} $ is derived as
\begin{equation} \label{eq:dot_V1}
\begin{aligned}
e^{T}\dot g
={}& \sum_{k=1}^{m}
\frac{
\dot e_k^{T}e_k
-
e_k^{T}e_k
\left(e_k^{T}e_k\right)^{-1}
\dot e_k^{T}e_k
}{
\|e_k\|_2
} = 0.
\end{aligned}
\end{equation}

By substituting the EL dynamics \eqref{eq:dym} and the control law \eqref{eq:compact_control_law_u}, and letting $C_{f}=\mathrm{diag}\{C_{n_l+1},\dots,C_n\} $ and $G_f=\text{col}(G_{n_l+1},\dots,G_n)$, we can expand the term $\dot{\tilde{V}}_{2}$ as 
\begin{equation} \label{eq:V2dot}
\begin{aligned}
\dot{\tilde{V}}_{2}
={}& \xi_f^{T}(u_f-C_f v_f-G_f)
    -\xi_f^{T}M_f\dot{\eta}_f
    +\tfrac{1}{2}\xi_f^{T}\dot{M}_f\xi_f \\
={}& \xi_f^{T}u_f
    -\xi_f^{T}(C_f\xi_f+C_f\eta_f+G_f)\\
    &\qquad \qquad  -\xi_f^{T}M_f\dot{\eta}_f
    +\tfrac{1}{2}\xi_f^{T}\dot{M}_f\xi_f \\
={}& \tfrac{1}{2}\xi_f^{T}(\dot{M}_f-2C_f)\xi_f
    -\xi_f^{T}(M_f\dot{\eta}_f+C_f\eta_f+G_f) \\
&
    -k_p\xi_f^{T}\bar R^{T}(g-g^{*})
    -k_v\xi_f^{T}\bar R^{T}\dot{g}
    +\xi_f^{T}Y_f\hat{\Theta}_f \\
&
    -\gamma \xi_f^{T}\bar R^{T}
    \operatorname{diag}(P_{g_k})
    \operatorname{SGN}(\dot{g})
    -\beta \xi_f^{T}\operatorname{SGN}(\xi_f) \\
={}&
    -k_p\xi_f^{T}\bar R^{T}(g-g^{*})
    -k_v\xi_f^{T}\bar R^{T}\dot{g}
    -\xi_f^{T}Y_f\tilde{\Theta}_f \\
&
    -\gamma \xi_f^{T}\bar R^{T}
    \operatorname{diag}(P_{g_k})
    \operatorname{SGN}(\dot{g})
    -\beta \xi_f^{T}\operatorname{SGN}(\xi_f) ,
\end{aligned}
\end{equation}
where the first equality invokes the EL dynamics
\eqref{eq:dym} together with $M_f\dot{\xi}_f=M_f\dot v_f-M_f\dot{\eta}_f$;
the second equality uses $v_f=\eta_f+\xi_f$; the third equality substitutes
the control law \eqref{eq:compact_control_law_u}; and the last equality follows from $\xi_f^{T}(\dot M_f-2C_f)\xi_f=0$ stated in Property~\ref{property:skew_symmetric} and $M_f\dot{\eta}_f+C_f\eta_f+G_f=Y_f\Theta_f$ stated in
Property~\ref{property:linearly_parameterize}.

Note that $\operatorname{SGN}(\cdot)$ denotes the Filippov set-valued map employed to treat the discontinuous components in the control law (see Section \ref{Section_2_nonsmooth} for a detailed description). 
\begin{rem}
It should be noted that the last equality in \eqref{eq:V2dot}, as well as the subsequent derivations, involve algebraic operations of terms defined through the set-valued map $\operatorname{SGN}(\cdot)$. 
Such operations are legitimate since the discontinuous component $\gamma\bar R^{T}{\rm diag}(P_{g_{k}}){\rm sgn}(\dot g)$ and the signal $\dot\eta_f$ are computed and shared throughout the control input \eqref{eq:compact_control_law_u}, the adaptation law \eqref{eq:compact_control_law_theta} and the regressor matrix.
Consequently, despite involving set-valued map $\operatorname{SGN}(\cdot)$, $\dot\eta_f$ refers to the same quantity wherever it appears, and the equalities involved hold pointwise for almost all $t$ along the closed-loop solutions. 
\end{rem}

Under the auxiliary dynamics \eqref{eq:compact_control_law_eta}, $\dot{\tilde{V}}_{3}$ is derived as
\begin{equation} \label{eq:V3dot} 
\begin{aligned}
\dot{\tilde{V}}_{3}
={}& (\eta_f -\bar{v}_{r} )^{T}
      (\dot{\eta }_f-\dot{\bar{v}}_{r} ) \\
={}& -k_{p} \tilde{\eta }_{f}^{T}
      \bar R^{T}
      \left(g-g^{*} \right)
      -k_{v} \tilde{\eta }_{f}^{T}
      \bar R^{T} \dot{g} \\
&
      -\gamma \tilde{\eta }_{f}^{T}
      \bar R^{T}
      {\rm diag}(P_{g_{k} } )
      \operatorname{SGN}(\dot{g})
      -\tilde{\eta }_{f}^{T}
      \dot{\bar{v}}_{r} ,
\end{aligned}
\end{equation} 
where $\tilde{\eta }_{f}=\eta_{f} -\bar{v}_{r} $ is introduced for notational brevity. 

Furthermore, recalling the relation $\dot{g}_k=\big(P_{g_k}/\|e_k\|\big)\dot{e}_k$, we present the following identity, which is essential for the subsequent analysis:
\begin{equation} \label{eq:sign_transfer} 
\begin{aligned}
\operatorname{SGN}(\dot{g})
={}& \operatorname{SGN}({\rm diag}(P_{g_{k} } )
      {\rm diag}(\left\| e_{k} \right\| )^{-1}
      \dot{e}) \\
={}& \operatorname{SGN}({\rm diag}(P_{g_{k} } )
      \dot{e}) \\
={}& \operatorname{SGN}({\rm diag}(P_{g_{k} } )
      \bar{H}v) \\
={}& \operatorname{SGN}({\rm diag}(P_{g_{k} } )
      \bar{R}\tilde{v}_f),
\end{aligned}
\end{equation}
where ${\rm diag}(\|e_{k}\|)=\operatorname{diag}\{\|e_{1}\|,\ldots,\|e_{m}\|\}\otimes I_{d}$.
The second equality follows from the invariance of the sign map under the positive diagonal scaling ${\rm diag}(\|e_k\|)^{-1}$; 
and the last equality follows from $\bar H v=\bar H\tilde v=\bar R\tilde v_f$. Here $\tilde v=v-\mathbf 1_{n}\otimes v_r=\operatorname{col}(0,\tilde v_f)$, 
and $\bar H(\mathbf 1_{n}\otimes v_r)=(H\mathbf 1_{n})\otimes v_r=0$, while $\tilde v_f=v_f-\mathbf 1_{n_f}\otimes v_r$.

Substituting the adaptive law \eqref{eq:compact_control_law_theta} into $\dot{\tilde V}_4$ in \eqref{eq:dot_V} leads to
\begin{equation}\label{eq:V4dot}
\begin{aligned}
\dot{\tilde V}_4=
{}&-\tilde\Theta_f^{T}\Lambda_f^{-1}\dot{\hat\Theta}_f\\
={}&-\tilde\Theta_f^{T}\Lambda_f^{-1}\bigl(-\Lambda_f Y_f^{T}\xi_f\bigr)
=\xi_f^{T}Y_f\tilde\Theta_f.
\end{aligned}
\end{equation}

Substituting \eqref{eq:dot_V1}--\eqref{eq:V4dot} into \eqref{eq:dot_V} yields
\begin{equation} \label{eq:Vdot-reduced} 
\begin{aligned}
\dot{\tilde{V}}
={}& k_{p} (g-g^{*} )^{T}\bar{H}v
      -k_{p} \xi_f ^{T}\bar R^{T}
      \left(g-g^{*} \right)
      -k_{v} \xi_f ^{T}\bar R^{T}\dot{g} \\
&
      -\gamma \xi_f ^{T}\bar R^{T}
      {\rm diag}(P_{g_{k} } )
      \operatorname{SGN}(\dot{g})
      -\beta \xi_f ^{T}\operatorname{SGN}(\xi_f ) \\
&
      -k_{p} \tilde{\eta }_{f} ^{T}\bar R^{T}
      \left(g-g^{*} \right)
      -k_{v} \tilde{\eta }_{f} ^{T}\bar R^{T}\dot{g} \\
&
      -\gamma \tilde{\eta }_{f} ^{T}\bar R^{T}
      {\rm diag}(P_{g_{k} } )
      \operatorname{SGN}(\dot{g}) 
      -\tilde{\eta }_{f}^{T}\dot{\bar{v}}_{r} \\
={}& -k_{v} v^{T}\bar{H}^{T}\dot{g}
      -\tilde{v}_f^{T}\dot{\bar{v}}_{r}
      +\xi_f ^{T}\dot{\bar{v}}_{r}  
      -\beta \xi_f ^{T}\operatorname{SGN}(\xi_f )\\
&
      -\gamma \tilde {v}_f^{T}\bar R^{T}
      {\rm diag}(P_{g_{k} } )
      \operatorname{SGN}({\rm diag}(P_{g_{k} } )
      \bar R\tilde{v}_f) ,
\end{aligned}
\end{equation}
where the final equality follows from the identity $\xi_f+\tilde{\eta }_{f}=\tilde v_f$. 
Specifically, this substitution facilitates the combination of the terms in $\dot{\tilde V}_2$ and $\dot{\tilde V}_3$: 
the proportional ($k_p$) terms cancel against the $\dot{\tilde V}_1$ contribution via $\bar R\tilde v_f=\bar H v$, 
the derivative ($k_v$) terms reduce to $-k_v v^{T}\bar H^{T}\dot g$, 
the sign term is reformulated via \eqref{eq:sign_transfer}, 
and the acceleration term is decomposed into $-\tilde v_f^{T}\dot{\bar v}_r +\xi_f^{T}\dot{\bar v}_r$ by $\tilde{\eta }_{f}=\tilde v_f-\xi_f$.

By Lemma~\ref{lem:nonsmooth}, the time derivative of the Lyapunov function \eqref{eq:lyapunov} satisfies
\begin{equation} \label{eq:dot_V_almost_final}
\begin{aligned}
\dot{V}
={}& -k_v v^{T}\bar{H}^{T}\dot{g}
    -\tilde{v}_{f}^{T} (\mathbf 1_{n_{f}} \otimes \dot{v}_{r})
    +\xi_{f}^{T}(\mathbf 1_{n_{f}} \otimes \dot{v}_{r}) \\
&
    -\gamma \bigl\|\operatorname{diag}(P_{g_k})\bar {R}\tilde{v}_{f}\bigr\|_{1}
    -\beta \|\xi_{f}\|_{1} \\
\le{}& -k_v v^{T}\bar{H}^{T}\dot{g}
    -\gamma \bigl\|\operatorname{diag}(P_{g_k})\bar {R}\tilde{v}_{f}\bigr\|_{2} \\
&
    +\|\tilde{v}_{f}\|_{2}\|\mathbf 1_{n_{f}} \otimes \dot{v}_{r}\|_{2}
    +\|\xi_{f}\|_{2}\|\mathbf 1_{n_{f}} \otimes \dot{v}_{r}\|_{2}
    -\beta \|\xi_{f}\|_{2},
\end{aligned}
\end{equation}
where the equality holds due to the symmetry of $\operatorname{diag}(P_{g_k})$, which implies 
$\tilde v_f^{T}\bar R^{T}\operatorname{diag}(P_{g_k})
=(\operatorname{diag}(P_{g_k})\bar R\tilde v_f)^{T}$.  
Then, invoking Filippov's calculus yields the $\ell_1$-norms,
and the inequality follows from $\|\cdot\|_1\ge\|\cdot\|_2$ together with the Cauchy--Schwarz bounds on the two acceleration cross-terms. 

By \cite[Lemma~2]{ZhaoShiyu_Auto_2016}, one has $\mathcal B=\bar{H}^{T}\operatorname{diag}(P_{g_k})\bar{H}$. 
Combining this with the partitioning of $\mathcal B$ yields $\mathcal B_{ff}=\bar{R}^{T}\operatorname{diag}(P_{g_k})\bar{R}$. 
Recalling that $\operatorname{diag}(P_{g_k})$ is symmetric and idempotent, one obtains 
\begin{equation}\label{eq:symmetric_idempotent_matrix}
\begin{aligned}
\|\operatorname{diag}(P_{g_k})\bar {R}\tilde{v}_f\|_2^2
={}& \tilde{v}_f^T \bar {R}^T \operatorname{diag}(P_{g_k})^2 \bar {R}\tilde{v}_f \\
={}& \tilde{v}_f^T \bar {R}^T \operatorname{diag}(P_{g_k}) \bar {R}\tilde{v}_f \\
={}& \tilde{v}_f^T \mathcal B_{ff}\tilde{v}_f ,
\end{aligned}
\end{equation}
which implies that
\begin{equation}\label{eq:norm_sqrt_inequality}
\begin{aligned}
\|\operatorname{diag}(P_{g_k})\bar {R}\tilde v_f\|
={}&\sqrt{\tilde v_f^T \mathcal{B}_{ff}\tilde v_f}\\
\ge{}&
\sqrt{\lambda_{\min}(\mathcal{B}_{ff})}\,\|\tilde v_f\|.
\end{aligned}
\end{equation}

Combining \eqref{eq:norm_sqrt_inequality},  $\bar{H}v=\dot e$, $\dot{g}=\operatorname{diag}(P_{g_k} / \|e_k\|)\dot e$, and  $\left\| \dot{v}_{r} (t)\right\|  \le \delta _{a} $ as stated in Assumption~\ref{assum:acc-bound}, it follows that
\begin{equation}\label{eq:Vdot-bound}
\begin{aligned}
\dot V
\le{}& -k_v v^{T}\bar{H}^{T}\dot{g}
    -\gamma \bigl\|\operatorname{diag}(P_{g_k})\bar {R}\tilde{v}_{f}\bigr\|_{2}\\
&\quad\quad
+\sqrt{n_f}\delta_a\|\tilde v_f\|
+\sqrt{n_f}\delta_a\|\xi_{f}\|
-\beta\|\xi_{f}\| \\
\le{}& -k_v \dot e^T
\operatorname{diag}\!\left(\frac{P_{g_k}}{\|e_{k}\|}\right)\dot e
-\gamma \sqrt{\lambda_{\min}(\mathcal{B}_{ff})}\,\|\tilde v_f\| \\
& \quad\quad +\sqrt{n_f}\delta_a\|\tilde v_f\|
+\sqrt{n_f}\delta_a\|\xi_{f}\|
-\beta\|\xi_{f}\|  \\
\le{}& -k_v \dot e^T
\operatorname{diag}\!\left(\frac{P_{g_k}}{\|e_{k}\|}\right)\dot e -(\beta-\sqrt{n_f}\delta_a)\|\xi_{f}\| \\
& \quad\quad\quad
-\left(
\gamma\sqrt{\lambda_{\min}(\mathcal{B}_{ff})}
-\sqrt{n_f}\delta_a
\right)\|\tilde v_f\|,
\end{aligned}
\end{equation}
where the first term is nonpositive since each orthogonal projection $P_{g_{k}}$ is positive semidefinite. 
We now show that the remaining two terms are nonpositive for all states satisfying $\tilde p_{f}\in\Omega_c$.
By Lemma~\ref{lem:local_preservation}, $\tilde p_{f}\in\Omega_c$ implies $\lambda_{\min}(\mathcal B_{ff})\ge\underline\lambda(\mu)>0$. 
Combining this with the gain conditions \eqref{eq:xi_condition} yields  $\beta-\sqrt{n_f}\delta_a>0$ and $\gamma\sqrt{\lambda_{\min}(\mathcal B_{ff})}
\ge\gamma\sqrt{\underline\lambda(\mu)}>\sqrt{n_f}\delta_a$. 
Consequently, both coefficients in the last two terms of \eqref{eq:Vdot-bound} are positive, leading to
\begin{equation}\label{eq:Vdot-cond}
\dot V\le 0,\quad\text{whenever } \tilde p_{f}\in\Omega_c .
\end{equation}

Note that \eqref{eq:Vdot-cond} is conditional; that is, 
it holds only for the states within $\Omega_c$. 
The following proposition establishes the positive invariance of the set $\Omega_c$ with its proof given in Appendix~\ref{Appendix:proof_invariant}. 

\begin{prop}
\label{prop:Omega_invariant}
Consider the closed-loop system described in Theorem~\ref{thm:main} under the control law \eqref{eq:local_control_law}. If conditions \eqref{eq:initial-condition} and \eqref{eq:xi_condition} are satisfied, then the set {\,}$\Omega_c$ defined in \eqref{eq:Omega_def} is positively invariant.
\end{prop}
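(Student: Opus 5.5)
The plan is a standard contradiction/first-exit-time argument that couples the conditional decrease \eqref{eq:Vdot-cond} with a lower bound of $V$ in terms of $\tilde p_f^T\mathcal B_{ff}^\ast\tilde p_f$.

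\emph{Step 1: the initial state lies in $\Omega_c$.} Using Lemma~\ref{lem:ZHAO_positive_lemma} and the nonnegativity of the remaining quadratic terms in \eqref{eq:lyapunov},
\[
V\ge k_p e^T(g-g^*)\ge \frac{k_p\,\tilde p_f^T\mathcal B_{ff}^\ast\tilde p_f}{2\max_k\|e_k\|}.
\]
To relate $\max_k\|e_k\|$ to the constant in \eqref{eq:initial-condition}, write $e_k=e_k^\ast+(r_k^T\otimes I_d)\tilde p_f$. This uses the fact that the leaders are exactly on target, so $\tilde p_l=0$. Then
\[
(r_k^T\otimes I_d)\tilde p_f=N_k(\mathcal B_{ff}^\ast)^{1/2}\tilde p_f,
\]
which gives $\|e_k\|\le\|e_k^\ast\|+\|N_k\|\sqrt{\tilde p_f^T\mathcal B_{ff}^\ast\tilde p_f}$. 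Hence, whenever $\tilde p_f^T\mathcal B_{ff}^\ast\tilde p_f\le\mu^2$,
\[
V\ge \frac{k_p\,\tilde p_f^T\mathcal B_{ff}^\ast\tilde p_f}{2\max_k(\|e_k^\ast\|+\mu\|N_k\|)}.
\]
At $t=0$ I argue by contradiction. Suppose $\tilde p_f(0)\notin\Omega_c$. By continuity along the segment from $0$ to $\tilde p_f(0)$, or more simply because the function $s\mapsto s/(\|e_k^\ast\|+\|N_k\|\sqrt s)$ is increasing, the bound can be extended to give $V(0)\ge\bar V(\mu)$. The monotonicity version is cleaner: the inequality $\|e_k\|\le\|e_k^\ast\|+\|N_k\|\sqrt{s}$ with $s=\tilde p_f^T\mathcal B_{ff}^\ast\tilde p_f$ holds for all $s$, and
\[
\frac{s}{\max_k(\|e_k^\ast\|+\|N_k\|\sqrt s)}
\]
is nondecreasing in $s$. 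So $s\ge\mu^2$ gives $V\ge\bar V(\mu)$, which contradicts \eqref{eq:initial-condition}. Thus $\tilde p_f(0)\in\Omega_c$.

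\emph{Step 2: first exit time.} Suppose invariance fails and let
\[
T=\inf\{t>0:\tilde p_f(t)\notin\Omega_c\}.
\]
Filippov solutions are absolutely continuous, so $\tilde p_f$ is continuous. Therefore $\tilde p_f(t)\in\Omega_c$ on $[0,T)$ and $\tilde p_f(T)^T\mathcal B_{ff}^\ast\tilde p_f(T)=\mu^2$. On $[0,T)$, \eqref{eq:Vdot-cond} holds for almost all $t$. By Lemma~\ref{lem:nonsmooth}(i), $V$ is absolutely continuous, hence nonincreasing there, and by continuity $V(T)\le V(0)<\bar V(\mu)$. On the other hand, the Step~1 bound evaluated at $s=\mu^2$ gives $V(T)\ge\bar V(\mu)$, which is a contradiction. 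One must also check that the solution exists up to $T$, i.e., stays in the domain (no coinciding agents, bounded states). This follows because $\|e_k\|\ge\|e_k^\ast\|-\mu\|(g_k^\ast)^TN_k\|>0$ inside the closure of $\Omega_c$, by the hypothesis of Lemma~\ref{lem:local_preservation}, and because $V$ bounded gives bounded $\xi_f,\eta_f,\tilde\Theta_f$.

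\emph{Main obstacle.} The delicate point is Step~1's decomposition $e_k=e_k^\ast+(r_k^T\otimes I_d)\tilde p_f$ together with the factorization through $(\mathcal B_{ff}^\ast)^{-1/2}$. This is what makes $\bar V(\mu)$ match exactly, and it requires care that $e^\ast$ corresponds to the same leader positions at each time $t$. A secondary subtlety is the nonsmooth setting: $\dot V\le0$ only almost everywhere, and only while in $\Omega_c$. I would handle this by integrating $\dot V$ over $[0,t]$ for $t<T$, using absolute continuity, rather than by invoking a pointwise derivative at the boundary.
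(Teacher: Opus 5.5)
Your proposal is correct and follows the paper's proof essentially step for step. Both arguments use the bound $V\ge k_p e^T(g-g^*)\ge \Phi(s)$, obtained from Lemma~\ref{lem:ZHAO_positive_lemma} and $\|e_k\|\le\|e_k^\ast\|+\|N_k\|\sqrt{s}$. They then use monotonicity of $\Phi$ to get $s(0)<\mu^2$, and finish with the first-exit-time contradiction from \eqref{eq:Vdot-cond}. Your additional remarks are minor refinements that the paper leaves implicit: that $\Phi$ is a minimum over $k$ of increasing functions, and that no agents coincide on $\overline{\Omega_c}$, so the solution exists up to the exit time.
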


Proposition~\ref{prop:Omega_invariant} guarantees that $\Omega_c$ is positively
invariant, i.e., $\tilde p_{f}(t)\in\Omega_c$ for all $t\ge0$. Consequently,
Lemma~\ref{lem:local_preservation} yields
$\lambda_{\min}(\mathcal B_{ff}(t))\ge\underline\lambda(\mu)>0$ for all $t\ge0$.
Let $c_v=\gamma\sqrt{\underline{\lambda}(\mu)}-\sqrt{n_f}\delta_a$ and $c_\xi=\beta-\sqrt{n_f}\delta_a$, both positive by \eqref{eq:xi_condition}.
Then, the bound \eqref{eq:Vdot-bound} implies that
$\dot V\le -c_v\|\tilde v_f\|-c_\xi\|\xi_f\|\le 0$ for almost all $t\ge0$.
Integrating this along the Filippov solutions of the closed-loop system leads to
$c_v\!\int_{0}^{T}\!\|\tilde v_f(t)\|\,dt+c_\xi\!\int_{0}^{T}\!\|\xi_f(t)\|\,dt
\le V(0)-V(T)\le V(0)$, 
and letting $T\to\infty$ yields $\tilde v_f\in \mathcal L_1$ and $\xi_f\in \mathcal L_1$.

Moreover, since $\dot V \le 0 $, the Lyapunov function \eqref{eq:lyapunov} is bounded, implying that $e^{T} (g-g^{*} )$, $\xi_f $, $\tilde{\eta }_{f}$, and $\tilde{\Theta }_{f}$ are bounded. Based on the lower-bound analysis of $e^{T} (g-g^{*} )$ in \cite[Corollary~2]{ZhaoShiyu_TAC_2019}, we further obtain that $\tilde{p}$ is bounded. 

From the positive invariance of set $\Omega_c$ and Lemma~\ref{lem:local_preservation}, we obtain that there exists $\underline e>0$ such that $\|e_k(t)\|\ge \underline e$. Therefore, $\dot g =\operatorname{diag}(P_{g_k} / \|e_k\|)\bar H v
=\operatorname{diag}(P_{g_k} / \|e_k\|) \bar R\tilde v_f $ is bounded since
$\tilde v_f=\xi_f+\tilde\eta_f$ is bounded. Then, from the auxiliary dynamics \eqref{eq:compact_control_law_eta}, $\dot\eta_f$ is bounded, implying that $\dot{\tilde\eta}_{f}=\dot\eta_f-\mathbf 1_{n_f}\otimes \dot v_r$ is bounded.

According to standard measurable selection arguments \cite{Filippov_1988}, 
there exists a measurable selection $\sigma(t) \in \operatorname{SGN}(\xi_f(t))$ along the Filippov solution. 
Combining this with the system model \eqref{eq:dym} and control law \eqref{eq:compact_control_law} yields the following closed-loop dynamics in the Filippov sense:

\begin{equation}\label{eq:xi_f_dym}
\begin{aligned}
M_f\dot\xi_f
&=u_f-C_fv_f-G_f-M_f\dot\eta_f \\
&=\dot\eta_f
-C_f\xi_f
-Y_f\tilde\Theta_f
-\beta {\,} \sigma(t).
\end{aligned}
\end{equation}

The derivation follows directly from \eqref{eq:V2dot} and is omitted for brevity.
Since $\dot\eta_f$, $M_f^{-1}$, $C_f$, $Y_f$, $\xi_f$, and $\tilde\Theta_f$ are bounded on the compact set under consideration, and $\sigma(t) \in [-1,1]^{dn_f}$, it follows that $\dot\xi_f$ is essentially bounded. Moreover, recalling that  $\dot{\tilde\eta}_f$ is bounded, $\dot{\tilde v}_f=\dot\xi_f+\dot{\tilde\eta}_f$ is essentially bounded.

Since $\dot\xi_f$ and $\dot {\tilde v}_f$ are essentially bounded on $[0,\infty)$, $\xi_f$ and $\tilde v_f$ are globally Lipschitz continuous, which implies that $t\mapsto\Vert{}\xi_f(t)\Vert{}$ and $t\mapsto\Vert{}\tilde v_f(t)\Vert{}$ are uniformly continuous on $[0,\infty)$. Combining this with $\xi_f, \tilde v_f \in \mathcal L_1$ and applying Barbalat's lemma \cite[Lemma~8.2]{Khalil_nonlinear_2002}, one obtains $\lim_{t\to\infty}\xi_f(t)=0$ and $\lim_{t\to\infty}\tilde v_f(t)=0$.
Moreover, noting that $\tilde v_f=\xi_f+\tilde\eta_f$, one obtains $\lim_{t\to\infty}\tilde\eta_f(t)=0$. Consequently, every $\omega$-limit point of the closed-loop system belongs to
\begin{equation}
\mathcal S_0
=
\left\{
(\tilde p_f,\xi_f,\tilde\eta_f,\tilde\Theta_f)
\,\middle|\,
\dot g=0,\,
\xi_f=0,\,
\tilde v_f=0,\,
\tilde\eta_f=0
\right\}.
\end{equation}

It remains to analyze the position error on $\mathcal S_0$. Since the preceding analysis establishes that $\tilde v_f \in \mathcal L_1$, the relation $\dot{\tilde p}_f = \tilde v_f$ implies that $\tilde p_f(t)$ converges to a finite limit. Specifically, $\lim_{t\to\infty}\tilde p_f(t)=\tilde p_{f,\infty}$, where $\tilde p_{f,\infty} =\tilde p_f(0)+\int_0^\infty \tilde v_f(s)\,ds$ is a constant vector. Moreover, it follows from $e(t)-e^*=\bar R\tilde p_f(t)$ that $e(t)$ tends to a finite limit $ e_\infty=e^*+\bar R\tilde p_{f,\infty}$, which indicates that $g(t)$ converges to a finite limit $g_\infty$. Consequently the subsequent analysis is reduced to directly bounding the limiting residual $\tilde p_{f,\infty}$.

By standard measurable selection arguments in \cite{Filippov_1988}, there exists a measurable selection
$\zeta(s)\in\operatorname{SGN}(\dot g(s))$ along the Filippov solution, such that $\dot{\tilde\eta}_f=-k_p\bar R^{T}(g-g^{*})-k_v\bar R^{T}\dot g
-\gamma\bar R^{T}\operatorname{diag}(P_{g_k})\zeta-\mathbf 1_{n_f}\otimes\dot v_r$.
The auxiliary dynamics in \eqref{eq:compact_control_law_eta} are governed by  $\dot\eta_f=\phi_f$ 
almost  everywhere, where $\phi_f=-k_p\bar R^{T}(g-g^{*})-k_v\bar R^{T}\dot g
-\gamma\bar R^{T}\operatorname{diag}(P_{g_k})\zeta$.
It then follows that $\dot{\tilde\eta}_f=\phi_f-\mathbf 1_{n_f}\otimes\dot v_{r}$. Thus, for any $T>0$, one has
\begin{equation}\label{integral_dissipation}
\begin{aligned}
&\frac{\tilde{\eta}_f(t+T)-\tilde{\eta}_f(t)}{T}
=
\frac{1}{T}\int_t^{t+T}\dot{\tilde{\eta}}_f(s)\,ds \\
&\qquad\quad
=
\frac{1}{T}\int_t^{t+T}\phi_f(s)\,ds
-
\mathbf{1}_{n_f}\otimes
\frac{1}{T}\int_t^{t+T}\dot{v}_r(s)\,ds .
\end{aligned}
\end{equation}

As $t \to \infty$, the left-hand side of \eqref{integral_dissipation} vanishes since $\tilde\eta_f \to 0$. The proportional term $k_p\bar R^{T}(g-g^{*})$ in $\phi_f$ converges to $k_p\bar R^{T}(g_\infty-g^{*})$; 
the average of the $k_v\bar R^{T}\dot g$  term in $\phi_f$, given by $k_v\bar R^{T}T^{-1}[g(t{+}T)-g(t)]$, tends to zero since $g(t)\to g_\infty$; 
the average acceleration satisfies $\|T^{-1}\int\dot v_r\|\le\delta_a$ by Assumption~\ref{assum:acc-bound}. 
Moreover, since $\zeta(s)\in[-1,1]^{md}$ and $\|\operatorname{diag}(P_{g_k})\|\le1$,
the robust term satisfies the uniform bound $\|\gamma T^{-1}\!\int_{t}^{t+T}\bar R^{T}\operatorname{diag}(P_{g_k})\zeta\,ds\|
\le c_g\gamma$ for all $t$,
where $c_g=\|\bar R\|\sqrt{md}$. 
Rearranging \eqref{integral_dissipation}, taking norms on both sides and letting 
$t\to\infty$, one obtains the steady-state estimate:
\begin{equation}\label{eq:steady-state-balance}
\bigl\|k_p\bar R^{T}(g_\infty-g^{*})\bigr\|\le c_g\gamma+\sqrt{n_f}\delta_a .
\end{equation}

Taking the inner product of $k_p\bar R^{T}(g_\infty-g^{*})$ with 
$\tilde p_{f,\infty}$ and invoking $\bar R\tilde p_{f,\infty}=e_\infty-e^{*}$ 
together with the Cauchy--Schwarz inequality, \eqref{eq:steady-state-balance} implies that
$k_p(e_\infty-e^{*})^{T}(g_\infty-g^{*})\le
(c_g\gamma+\sqrt{n_f}\delta_a)\|\tilde p_{f,\infty}\|$.

Combining Lemma~\ref{lem:ZHAO_positive_lemma} with the relation $(e-e^{*})^{T}(g-g^{*})=\sum_k(\|e_k\|+\|e_k^{*}\|)(1-g_k^{T}g_k^{*})\ge e^{T}(g-g^{*})$, and noting that $\max_k\|e_{k,\infty}\|\le\bar e:=\max_k(\|e_k^{*}\|+\mu\|N_{k}\|)$ holds on $\Omega_c$, one has
\begin{equation}
(e_\infty-e^{*})^{T}(g_\infty-g^{*})\ge a_0\|\tilde p_{f,\infty}\|^2,{\,}
a_0=\frac{\lambda_{\min}(\mathcal B_{ff}^{*})}{2\bar e}.
\end{equation}

From the preceding analysis, one obtains $a_0 k_{p}\|\tilde p_{f,\infty}\|^2 \le (\sqrt{n_f}\delta_a+c_g\gamma)\|\tilde p_{f,\infty}\|$. 
If $\tilde p_{f,\infty}=0$, this inequality holds trivially; if $\tilde p_{f,\infty}\ne0$, one has $\|\tilde p_{f,\infty}\| \le a_0^{-1} k_p^{-1}(\sqrt{n_f}\delta_a+c_g\gamma)$. 
We thus conclude that 
\begin{equation}\label{pe_boundedness}
\limsup_{t\to\infty}
\|\tilde p_{f}(t)\|\le \frac{2\bar e\,(\sqrt{n_f}\delta_a+c_g\gamma)}{k_p\,\lambda_{\min}(\mathcal B_{ff}^{*})}.
\end{equation}

Consequently, the follower position tracking error is uniformly ultimately bounded (UUB), with its ultimate bound reducible via the control gain $k_p$, as indicated in \eqref{pe_boundedness}. 
In addition, the preceding analysis invoking Barbalat's lemma guarantees that the velocity tracking error converges asymptotically, i.e., $\lim_{t\to\infty}\tilde v_f(t)=0$. 
Therefore, we conclude that the closed-loop system achieves local practical formation tracking. This completes the proof.  \hfill $\blacksquare$

The constant-velocity formation tracking problem, which is an open problem itself, can be naturally regarded as a special case of the time-varying problem addressed in Theorem~\ref{thm:main}.
In this special case, a stronger result is obtained: the practical tracking in Theorem~\ref{thm:main} is elevated to asymptotic tracking, and neither the initial-condition restriction \eqref{eq:initial-condition} nor the rigidity preservation in Lemma~\ref{lem:local_preservation} is required. 

Nevertheless, without the initial-condition restriction, the invariance of $\Omega_c$ can no longer be used to ensure  inter-agent collision avoidance. Instead, we introduce the following standard assumption.
 
\begin{assum}\label{assum:separation}
Assume that no agents collide with each other during formation evolution, i.e., $\|e_k\|>\underline e$  for all $t \ge 0$.
\end{assum}

Under Assumption~\ref{assum:separation}, Theorem~\ref{thm:main} yields the following corollary for the constant-velocity formation tracking problem.

\begin{cor}\label{cor:constant_velocity}
Consider the communication-free multi-agent system \eqref{eq:dym} under Assumption~\ref{assum:positive-definite} and \ref{assum:separation}, 
and suppose the leaders move with a constant velocity, i.e., $\dot v_r(t)\equiv 0$, such that
Assumptions~\ref{assum:acc-bound} holds with $\delta_a=0$. 
Let the followers be driven by the control law \eqref{eq:local_control_law} with the design parameters $\gamma=0$ and $\beta>0$, using only the bearings $\{g_{ij},\dot g_{ij}\}_{j\in\mathcal N_i}$ and their own states $\{p_{i} ,v_i\}$. 
Then, the follower position and velocity tracking errors converge asymptotically to zero, i.e., 
\begin{equation}\label{eq:cor_conclusion}
\lim_{t\to\infty}\tilde v_f(t)=0,
\qquad
\lim_{t\to\infty}\tilde p_f(t)=0 .
\end{equation}

In particular, neither the initial condition in \eqref{eq:initial-condition} nor the rigidity-preservation requirement in Lemma~\ref{lem:local_preservation} is needed, and the system achieves asymptotic formation tracking. 
\end{cor}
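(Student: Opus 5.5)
I will reuse the Lyapunov function $V$ in \eqref{eq:lyapunov} and the derivative computation \eqref{eq:dot_V}--\eqref{eq:dot_V_almost_final}. None of those steps depends on $\gamma>0$ or on the rigidity bound of Lemma~\ref{lem:local_preservation}. With $\dot v_r\equiv0$, so that $\dot{\bar v}_r=0$, and $\gamma=0$, the identity \eqref{eq:dot_V_almost_final} reduces for almost all $t$ to
\begin{equation*}
\dot V=-k_v\,\dot e^{T}\operatorname{diag}\!\left(\frac{P_{g_k}}{\|e_k\|}\right)\dot e-\beta\|\xi_f\|_1\le -\frac{k_v}{\max_k\|e_k\|}\sum\nolimits_k\|P_{g_k}\dot e_k\|^2-\beta\|\xi_f\|_1 .
\end{equation*}
This inequality holds globally, so no invariance of $\Omega_c$ and no initial-condition restriction \eqref{eq:initial-condition} are needed. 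Since $V$ is nonincreasing, $e^{T}(g-g^*)$, $\xi_f$, $\tilde\eta_f$ and $\tilde\Theta_f$ are bounded. By \cite[Corollary~2]{ZhaoShiyu_TAC_2019}, $\tilde p_f$ is bounded, and hence $\max_k\|e_k\|$ is bounded. Integrating $\dot V$ then gives $\xi_f\in\mathcal L_1$ and $\sum_k\|e_k\|\,\|\dot g_k\|^2\in\mathcal L_1$. Because Assumption~\ref{assum:separation} gives $\|e_k\|>\underline e$, the second fact yields $\dot g\in\mathcal L_2$.

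Boundedness of the remaining signals comes next. We have $\tilde v_f=\xi_f+\tilde\eta_f$ bounded. Assumption~\ref{assum:separation} then makes $\dot g=\operatorname{diag}(P_{g_k}/\|e_k\|)\bar R\tilde v_f$ bounded, so $\dot\eta_f$ is bounded by \eqref{eq:compact_control_law_eta} with $\gamma=0$. The closed-loop relation \eqref{eq:xi_f_dym} then shows that $\dot\xi_f$ is essentially bounded. Barbalat's lemma on $\xi_f\in\mathcal L_1$ gives $\xi_f\to0$. For $\dot g$ I will show that $\ddot g$ is essentially bounded: it depends on $\dot{\tilde v}_f=\dot\xi_f+\dot\eta_f$ (because $\dot v_r=0$), on $\dot e$, and on $1/\|e_k\|$, and all of these are bounded. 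Barbalat's lemma then gives $\dot g\to0$.

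The main obstacle is to extract $\tilde v_f\to0$ and $\tilde p_f\to0$ without the $\gamma$-damping that Theorem~\ref{thm:main} used. I plan a second Barbalat argument on $\dot{\tilde\eta}_f=\dot\eta_f=-k_p\bar R^{T}(g-g^*)-k_v\bar R^{T}\dot g$, which is now continuous and has bounded derivative. The integral $\int_0^t\dot{\tilde\eta}_f\,ds=\tilde\eta_f(t)-\tilde\eta_f(0)$ converges only if $\tilde\eta_f$ converges, so Barbalat cannot be applied to it directly. I will therefore argue through $\dot g\to0$ instead.

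Write $\dot g_k=P_{g_k}\dot e_k/\|e_k\|$. Since $\|e_k\|$ is bounded, $\dot g\to0$ gives $\operatorname{diag}(P_{g_k})\bar R\tilde v_f\to0$. At a limit configuration this is $\mathcal B_{ff}\tilde v_f\to0$ in the sense of \eqref{eq:symmetric_idempotent_matrix}. To exploit this, I will use an $\omega$-limit/LaSalle-type argument for the time-invariant system obtained when $\dot v_r=0$, writing the dynamics in the error coordinates $(\tilde p_f,\xi_f,\tilde\eta_f,\tilde\Theta_f)$. On the limit set, $\xi_f\equiv0$ and $\dot g\equiv0$ hold. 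Hence $g$ is constant, so $\dot\eta_f=-k_p\bar R^{T}(g-g^*)$ is constant. This makes $\tilde v_f=\tilde\eta_f$ affine in time, while $\tilde p_f$, and thus $\tilde v_f$, stays bounded. It follows that $\bar R^{T}(g-g^*)=0$ and that $\tilde v_f$ is constant. With $\tilde v_f$ constant and the bearings fixed, $\tilde p_f$ drifts linearly unless $\tilde v_f=0$; since $\tilde p_f$ is bounded, $\tilde v_f=0$.

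Finally, $\bar R^{T}(g-g^*)=0$ gives $\tilde p_f^{T}\bar R^{T}(g-g^*)=(e-e^*)^{T}(g-g^*)=0$. This quantity is bounded below by $a_0\|\tilde p_f\|^2$ via Lemma~\ref{lem:ZHAO_positive_lemma} and the relation used before \eqref{pe_boundedness}, so $\tilde p_f=0$. This proves \eqref{eq:cor_conclusion}.

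Justifying the invariance principle for the Filippov system is delicate, because the remaining discontinuity $\beta\operatorname{sgn}(\xi_f)$ is still present. I expect to handle it through the time-invariance of the system and the boundedness of the trajectories.
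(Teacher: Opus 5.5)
\textbf{Status: there is a gap in the key invariance step, but it can be closed with ingredients you already have.}

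Your steps through $\xi_f\to0$ agree with the paper. Your extra conclusion $\dot g\to0$ is also valid: it follows from $\dot g\in\mathcal L_2$ and essential boundedness of $\ddot g$ under Assumption~\ref{assum:separation}.

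The gap is in the step you call the main obstacle. The invariance principle you plan to use is not available, because the closed loop is \emph{not} time-invariant in the coordinates $(\tilde p_f,\xi_f,\tilde\eta_f,\tilde\Theta_f)$.
\begin{itemize}
\item The $\xi_f$-equation \eqref{eq:xi_f_dym} contains $M_f(p_f)$, $C_f(p_f,v_f)$ and $Y_f(p_f,\dot p_f,\eta_f,\dot\eta_f)$. These are evaluated at the absolute positions $p_f=p_f^*(0)+\mathbf 1_{n_f}\otimes v_r t+\tilde p_f$, which drift linearly in $t$. In the paper's own simulation, for example, $M_i$ depends on $\cos p_{iy}$.
\item In absolute coordinates the system is autonomous, but its solutions are unbounded, so there is no compact $\omega$-limit set to work with.
\end{itemize}
Hence neither the smooth nor the Filippov LaSalle principle applies. ``Time-invariance plus boundedness'' therefore cannot be used to handle the $\beta\operatorname{sgn}(\xi_f)$ term. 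Time-varying nonsmooth LaSalle--Yoshizawa results would only return $\xi_f\to0$ and $\dot g\to0$, which you already have, not the invariance conclusions your argument needs.

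\textbf{A direct fix.} You can close the step without any invariance principle. Let $w=\bar R^{T}(g-g^{*})$, so that $\dot{\tilde\eta}_f=-k_pw-k_v\bar R^{T}\dot g$ and $\dot w=\bar R^{T}\dot g$. Set $\delta(t):=\sup_{s\ge t}\|\dot g(s)\|$, which tends to $0$. Integrating over $[t,t+T]$ gives
\[
k_pT\|w(t)\|\le 2\sup_{s\ge0}\|\tilde\eta_f(s)\|+\Bigl(\tfrac{k_pT^{2}}{2}+k_vT\Bigr)\|\bar R\|\,\delta(t).
\]
Hence $\limsup_{t\to\infty}\|w(t)\|\le 2\sup\|\tilde\eta_f\|/(k_pT)$ for every $T>0$, so $w\to0$. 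Then \eqref{eq:cor_final} gives $\tilde p_f\to0$. Finally, $\tilde v_f=\dot{\tilde p}_f$ is uniformly continuous and has a convergent integral, so Barbalat's lemma gives $\tilde v_f\to0$.

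\textbf{Alternative fix.} Your limit-set reasoning can also be kept, provided you apply it only to the $(\tilde p_f,\tilde\eta_f)$ subsystem. For $\gamma=0$ that subsystem is smooth away from collisions. It is autonomous apart from the input $\xi_f$, which vanishes. Limit-set theory for asymptotically autonomous systems then makes its $\omega$-limit set invariant under the limiting flow.

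\textbf{Comparison with the paper.} The paper avoids invariance altogether. It integrates the cross term $\frac{d}{dt}(\tilde\eta_f^{T}w)$ to obtain $w\in\mathcal L_2$ and then applies Barbalat's lemma. That route needs only $\dot g\in\mathcal L_2$, not $\dot g\to0$.
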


\textit{Proof: }
Following a similar argument to the derivation of \eqref{eq:Vdot-bound} in the proof of Theorem~\ref{thm:main}, together with $\delta_a=0$ and $\gamma=0$, one has
\begin{equation}\label{eq:cor_Vdot}
\begin{aligned}
\dot V
\le{}&
-k_v\,\dot e^{T}\operatorname{diag}\left(\frac{P_{g_k}}{\|e_k\|}\right)\dot e
-\beta\|\xi_{f}\|\\
\le {}& 
-k_v\sum_{k=1}^{m}\|e_k\|\,\|\dot g_k\|^{2}
-\beta\|\xi_{f}\|\le 0,
\end{aligned}
\end{equation}
where the second inequality follows from $\dot g_k=P_{g_k}\dot e_k/\|e_k\|$ and $P_{g_k}^{2}=P_{g_k}$. 
Unlike \eqref{eq:Vdot-cond}, which is conditional on $\Omega_c$,  the inequality \eqref{eq:cor_Vdot} holds for almost all $t \ge0$.
Hence, $V(t)\le V(0)$, and $\xi_f$, $\tilde\eta_f$, $\tilde\Theta_f$, and $e^{T}(g-g^{*})$ are bounded. 
Since $\|e_k\|\le\|e_k^{*}\|+\sqrt2\|\tilde p_f\|$, Lemma~\ref{lem:ZHAO_positive_lemma} gives
$e^{T}(g-g^{*})\ge\lambda_{\min}(\mathcal B_{ff}^{*})\|\tilde p_f\|^{2}/\bigl(2(\max_k\|e_k^{*}\|+\sqrt2\|\tilde p_f\|)\bigr)$,
whose right-hand side is unbounded in $\|\tilde p_f\|$.
Thus, the boundedness of $e^{T}(g-g^{*})$ implies that there exists $\bar p>0$ such that $\|\tilde p_f(t)\|\le\bar p$ and $\|e_k(t)\|\le\bar e':=\max_k\|e_k^{*}\|+\sqrt2\,\bar p$.
Moreover, $\tilde v_f = \xi_f + \tilde\eta_f$ is bounded, which, together with Assumption~\ref{assum:separation}, yields that $\dot g_k = P_{g_k}(r_k^{T}\otimes I_d)\tilde v_f/\Vert{}e_k\Vert{}$ is bounded. 
Consequently,  $\dot\eta_f$ is also bounded by \eqref{eq:compact_control_law_eta}.

By combining \eqref{eq:cor_Vdot} and Assumption~\ref{assum:separation}, one obtains $\xi_f\in\mathcal L_1$ and  $\dot g\in\mathcal L_2$.
Since $\dot\eta_f$, $M_f^{-1}$, $C_f$, $Y_f$, $\xi_f$, and $\tilde\Theta_f$ are bounded on the compact set under consideration, the dynamics of $\xi_f$ \eqref{eq:xi_f_dym} implies that $\dot\xi_f$ is bounded almost everywhere.
Hence $\|\xi_f\|$ is uniformly continuous, and via Barbalat's lemma \cite[Lemma~8.2]{Khalil_nonlinear_2002} one obtains that $\lim_{t\to\infty}\xi_f(t)=0$.

Let $w=\bar R^{T}(g-g^{*})$, which is bounded and satisfies $\dot w=\bar R^{T}\dot g$.
Substituting $\dot v_r=0$ and $\gamma=0$ into \eqref{eq:compact_control_law_eta} yields $\dot{\tilde\eta}_f=-k_p\,w-k_v\bar R^{T}\dot g$. 
Combining this with the identity $\bar R\tilde\eta_f=\bar R\tilde v_f-\bar R\xi_f=\dot e-\bar R\xi_f$, it follows that
\begin{equation}\label{eq:cor_cross}
\frac{d}{dt}\bigl(\tilde\eta_f^{T}w\bigr)
=-k_p\|w\|^{2}-k_v w^{T}\bar R^{T}\dot g+\dot e^{T}\dot g-\xi_f^{T}\bar R^{T}\dot g .
\end{equation}
By Young's inequality, $-k_v w^{T}\bar R^{T}\dot g\le\frac{k_p}{2}\|w\|^{2}+\frac{k_v^{2}\|\bar R\|^{2}}{2k_p}\|\dot g\|^{2}$; moreover, $\dot e^{T}\dot g=\sum_{k=1}^{m}\|e_k\|\,\|\dot g_k\|^{2}\le\bar e'\,\|\dot g\|^{2}$ and $|\xi_f^{T}\bar R^{T}\dot g|\le\|\bar R\|\,\bar g\,\|\xi_f\|$, where $\bar g=\sup_{t\ge0}\|\dot g(t)\|<\infty$ as $\dot g_k$ is bounded. 
Integrating \eqref{eq:cor_cross} over $[0,t]$ yields 
\begin{equation}\label{eq:cor_w_ineq}
\begin{aligned}
\frac{k_p}{2}\int_{0}^{t}\|w\|^{2}d\tau
\le{}&
2\sup_{\tau\ge0}|\tilde\eta_f^{T}(\tau)w(\tau)|
\\
{}&+c_1\int_{0}^{\infty}\|\dot g\|^{2}d\tau+c_2\int_{0}^{\infty}\|\xi_f\|\,d\tau,
\end{aligned}
\end{equation}
where $c_1=\frac{k_v^{2}\|\bar R\|^{2}}{2k_p}+\bar e'$ and $c_2=\|\bar R\|\,\bar g$. Given the boundedness of $\tilde{\eta}_{f}$ and $w$, together with $\xi_f \in \mathcal{L}_1$ and $\dot{g} \in \mathcal{L}_2$, the right-hand side of \eqref{eq:cor_w_ineq} is bounded, which implies that $w \in \mathcal{L}_2$.
Furthermore, since $\frac{d}{dt}\|w\|^{2}=2w^{T}\bar R^{T}\dot g$ is bounded, $\|w\|^{2}$ is uniformly continuous. By Barbalat's lemma \cite[Lemma~8.2]{Khalil_nonlinear_2002}, one has $\lim_{t\to\infty}w(t)=\lim_{t\to\infty}\bar R^{T}(g-g^{*})=0$.
 
Since $\bar R\tilde p_f=e-e^{*}$, by combining Lemma~\ref{lem:ZHAO_positive_lemma} with the inequality $(e-e^{*})^{T}(g-g^{*})\ge e^{T}(g-g^{*})$ established in the proof of Theorem~\ref{thm:main}, one has
\begin{equation}\label{eq:cor_final}
\begin{aligned}
\frac{\lambda_{\min}(\mathcal B_{ff}^{*})}{2\bar e'}\,\|\tilde p_f\|^{2}
\le{}& e^{T}(g-g^{*})
\le (e-e^{*})^{T}(g-g^{*})\\
={}&
\tilde p_f^{T}w
\le \bar p\,\|\bar R^{T}(g-g^{*})\| ,
\end{aligned}
\end{equation}
whose right-hand side tends to zero, yielding $\lim_{t\to\infty}\tilde p_f(t)=0$.
Consequently, $\int_{0}^{t}\tilde\eta_f\,d\tau=\tilde p_f(t)-\tilde p_f(0)-\int_{0}^{t}\xi_f\,d\tau$ converges to a finite limit as $t\to\infty$, yielding $\tilde\eta_f\in\mathcal L_1$.
Since $\dot{\tilde\eta}_f$ is bounded, $\tilde\eta_f$ is uniformly continuous, and via Barbalat's lemma \cite[Lemma~8.2]{Khalil_nonlinear_2002} one has $\lim_{t\to\infty}\tilde\eta_f(t)=0$. 
Together with $\lim_{t\to\infty}\xi_f(t)=0$, it follows that $\lim_{t\to\infty}\tilde v_f(t)=0$, which completes the proof. \hfill $\blacksquare$

\section{Simulation Results}\label{Section_4}

\begin{figure*}[!t]
\centering

\subfloat[]{%
\includegraphics[width=0.49\textwidth]{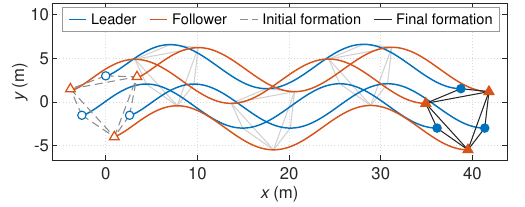}
\label{fig:sim_1a}}
\hfil
\subfloat[]{%
\includegraphics[width=0.49\textwidth]{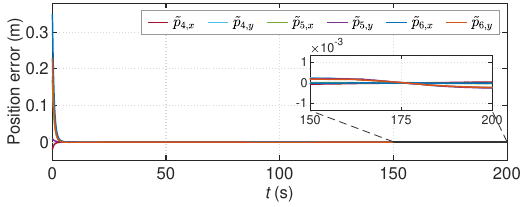}
\label{fig:sim_1b}}

\vspace{1mm}

\subfloat[]{%
\includegraphics[width=0.49\textwidth]{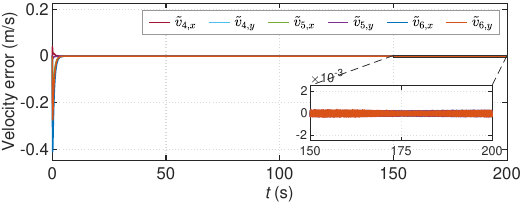}
\label{fig:sim_1c}}
\hfil
\subfloat[]{%
\includegraphics[width=0.49\textwidth]{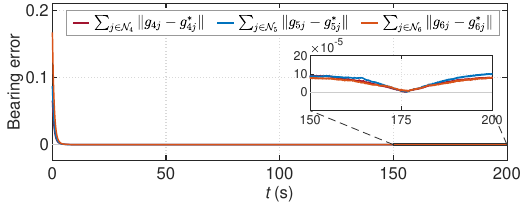}
\label{fig:sim_1d}}

\caption{Formation tracking under a time-varying leader velocity: 
(a) formation trajectories; 
(b) follower position errors $\tilde p_{i}$; 
(c) follower velocity errors $\tilde v_{i}$; 
(d) follower bearing errors $\sum_{j\in\mathcal N_i}\|g_{ij}-g_{ij}^{*}\|$, $i=4,5,6$.}
\label{fig:sim_1}
\end{figure*}

\begin{figure*}[!t]
\centering

\subfloat[]{%
\includegraphics[width=0.49\textwidth]{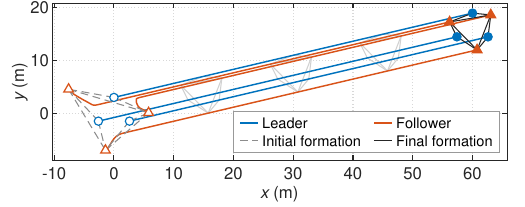}
\label{fig:sim_2a}}
\hfil
\subfloat[]{%
\includegraphics[width=0.49\textwidth]{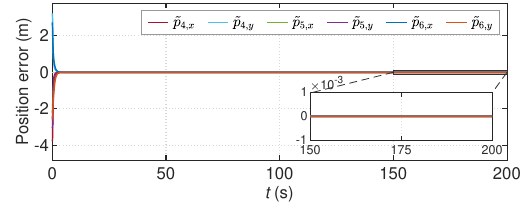}
\label{fig:sim_2b}}

\vspace{1mm}

\subfloat[]{%
\includegraphics[width=0.49\textwidth]{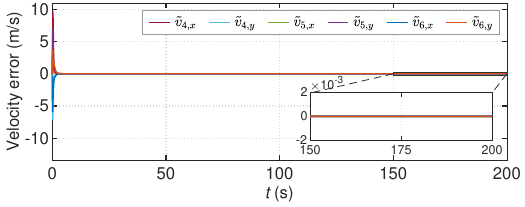}
\label{fig:sim_2c}}
\hfil
\subfloat[]{%
\includegraphics[width=0.49\textwidth]{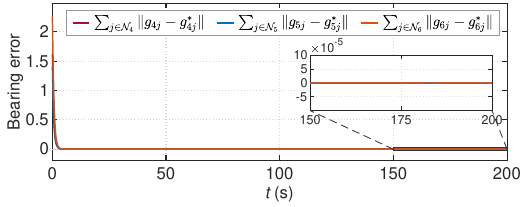}
\label{fig:sim_2d}}

\caption{Formation tracking under a constant leader velocity: 
(a) formation trajectories; 
(b) follower position errors $\tilde p_{i}$; 
(c) follower velocity errors $\tilde v_{i}$; 
(d) follower bearing errors $\sum_{j\in\mathcal N_i}\|g_{ij}-g_{ij}^{*}\|$, $i=4,5,6$.}
\label{fig:sim_2}
\end{figure*}

To validate the proposed control law \eqref{eq:local_control_law}, we consider a
planar MAS consisting of $n=6$ agents, of which $n_l=3$ are leaders, indexed by
$\{1,2,3\}$, and $n_f=3$ are followers, indexed by $\{4,5,6\}$. The formation configuration
described in this paragraph is common to the two scenarios of constant and time-varying leader velocities. 
Following the simulation setup in \cite{Zhang_Auto_2017_Simulation_Model}, the dynamics of
each follower $i$ are governed by \eqref{eq:dym}, with
$M_i=[\vartheta_1+2\vartheta_2 c_{iy},\,\vartheta_3+\vartheta_2 c_{iy};\,
\vartheta_3+\vartheta_2 c_{iy},\,\vartheta_3]$, 
$C_i=[-\vartheta_2 s_{iy}v_{iy},\,-\vartheta_2 s_{iy}(v_{ix}+v_{iy});\,
\vartheta_2 s_{iy}v_{ix},\,0]$, and $G_i=0$, where $c_{iy}=\cos p_{iy}$ and
$s_{iy}=\sin p_{iy}$. The unknown parameters are
$\Theta_i=[\vartheta_1,\vartheta_2,\vartheta_3]^{T}=[1.301,0.056,0.296]^{T}$. 
The graph has edge set $\mathcal E = \{ (1,4),$ $(2,5), $ $(3,6), $ $(2,4), $ $(3,5),$ $(1,6), $ $(4,5), $ $(5,6), $ $(4,6) \}$, and the desired bearings $g^{*}_{ij}$ are induced through \eqref{eq:e_ij_g_ij} by the target configuration, whose positions at $t=0$ are $p^{*}_1=(0,3)$, $p^{*}_2=(-2.6,-1.5)$, $p^{*}_3=(2.6,-1.5)$, $p^{*}_4=(-3.83,1.32)$, $p^{*}_5=(0.77,-3.98)$ and $p^{*}_6=(3.06,2.66)$. 
In both scenarios the leaders start on target, $p_i(0)=p^{*}_i$ for $i\in\{1,2,3\}$; 
the initial velocities, auxiliary states, and estimator states of followers are initialized to zero; 
and the control gains and design parameters are chosen as $k_p=40$, $k_v=30$, $\beta=0.1$ and $\Lambda_i=45I_3$.

\subsection{Case 1: Leaders with Time-Varying Velocity}
This scenario is presented to validate the theoretical results in Theorem~\ref{thm:main}. The leaders move with the time-varying velocity $v_r(t)=[\,v_0\tanh(t/\tau),\,v_0\kappa\tanh(t/\tau)\sin(\omega t)\,]^{T}~
\mathrm{m/s}$, with $v_0=0.212$, $\kappa=0.75$, $\omega=2\pi/100$, and $\tau=25$, and the design parameter $\gamma$ is set to $\gamma=1$. 
The followers are initially placed at $p_4(0)=(-3.85,1.5)$, $p_5(0)=(0.94,-3.97)$ and $p_6(0)=(3.41,2.89)$. 
For $\mu=0.93\in(0,\bar\mu)$, these choices yield $\underline\lambda(\mu)=0.065$ and $V(0)=1.31<\bar V(\mu)=2.13$, such that
conditions \eqref{eq:initial-condition}--\eqref{eq:xi_condition} of
Theorem~\ref{thm:main} are satisfied.

Fig.~\ref{fig:sim_1} presents the simulation results of this scenario. As shown in Fig.~\ref{fig:sim_1}(a), the followers successfully achieve the prescribed geometry and subsequently track the leaders moving with time-varying velocity. 
The position errors in Fig.~\ref{fig:sim_1}(b) converge within a few seconds and remain within a small neighborhood of zero (see the inset), validating the UUB position tracking established in Theorem~\ref{thm:main}. 
Meanwhile, the velocity errors in Fig.~\ref{fig:sim_1}(c) converge to zero, where the chattering is inherently induced by the discontinuous control terms. 
The bearing errors in Fig.~\ref{fig:sim_1}(d) converge to a small neighborhood of zero accordingly. 
These observations are consistent with the local practical formation tracking result asserted in Theorem~\ref{thm:main}.

\subsection{Case 2: Leaders with Constant Velocity}
This scenario is given to verify Corollary~\ref{cor:constant_velocity} with the same setup as in Case 1  except for two modifications.  
First, the leaders move at a constant velocity $v_r=[0.3,\,0.08]^{T}~\mathrm{m/s}$, and the
design parameter $\gamma$ is set to $\gamma=0$ in accordance with Corollary~\ref{cor:constant_velocity}.
Second, to exhibit the global stability, the initial positions of the followers are randomly generated within a disk of radius $4$ centered at their respective target locations $p^{*}_i$, $i\in\{4,5,6\}$. 
These positions typically lie far outside the admissible set $\Omega_c$ in Theorem~\ref{thm:main}, and the initial-condition restriction \eqref{eq:initial-condition} is completely removed. 

Fig.~\ref{fig:sim_2} presents the simulation results of this scenario. 
As illustrated in Fig.~\ref{fig:sim_2}(a), the followers successfully achieve the desired formation and track the constant-velocity leaders, despite large initial errors that lie outside the admissible set $\Omega_c$ of Theorem~\ref{thm:main}. This demonstrates the formation tracking asserted in Corollary~\ref{cor:constant_velocity}.
More notably, as in Fig.~\ref{fig:sim_2}(b), the position errors converge to zero without any residual, thereby eliminating the ultimate bound observed in the time-varying case (Fig.~\ref{fig:sim_1}(b)). 
The velocity errors in Fig.~\ref{fig:sim_2}(c) and the bearing errors in Fig.~\ref{fig:sim_2}(d)
both converge to zero. 
These results validate the asymptotic tracking result established in Corollary~\ref{cor:constant_velocity}.

\section{Conclusion} \label{Section_5}
This paper has investigated communication-free bearing-only formation tracking for multi-agent systems with followers of uncertain Euler--Lagrange dynamics and leaders of time-varying velocities. 
A distributed adaptive control law is developed using only local bearing measurements, without requiring leaders' state information or inter-agent communication. 
By exploiting bearing-rate measurements to extract partial relative-velocity signals, in conjunction with bearing rigidity properties, the control scheme establishes a robust damping mechanism to compensate for the unavailable velocity error. 
Furthermore, a purely local, bearing-driven auxiliary variable is introduced, which, incorporating this damping mechanism, provides a surrogate velocity error to facilitate the adaptive control design for EL dynamics. 
Through a Filippov-based Lyapunov analysis, the closed-loop system is shown to achieve local practical formation tracking over an explicit initial-condition set derived from a bearing-rigidity preservation argument. 
The proposed law can subsume stationary and constant-velocity formations as special cases. Future work will focus on relaxing the restriction on the initial-condition set. 

\appendices

\section{Proof of Lemma \ref{lem:local_preservation}} \label{Appendix:proof_local}

For notational brevity, throughout this proof we denote 
$\alpha_k = \|(g_k^{\ast})^{T} N_{k}\|$ and $\beta_k = \|P_{g_k^{\ast}} N_{k}\|$, 
which are constants that can be calculated \textit{a priori} based on the desired bearing vectors and the underlying graph topology. 

Let $z = (\mathcal B_{ff}^{\ast})^{1/2} \tilde p_{f}$ denote the rigidity-weighted error. Then, for any $\tilde p_{f}\in\Omega_c$, one has $\|z\|_2^{2} < \mu^{2}\Leftrightarrow\|z\|<\mu$. 
From $\tilde {p}={\rm col}(0,\tilde{p}_f)$, we obtain $e-e^{\ast}= \bar{H}(p-p^{\ast})=(R\otimes I_d)\tilde p_{f}$, which, combined with $\tilde p_{f} = (\mathcal B_{ff}^{\ast})^{-1/2} z$ yields
\begin{equation}\label{eq:edge-perturbation}
    e_k - e_k^{\ast} = (r_k^{T}\otimes I_d)\,\tilde p_{f} = N_{k} z,{\,}{\,}k = 1,\ldots,m.
\end{equation}

Decomposing $N_{k} z$ along $g_k^{\ast}$ and its orthogonal complement yields  
\begin{equation}\label{eq:decomp}
N_{k} z
=
\underbrace{(g_k^{*})^{T}N_{k} z}_{\textstyle a_k}{\,}g_k^{*}
+
\underbrace{P_{g_k^{*}}N_{k} z}_{\textstyle b_k}.
\end{equation}

The Cauchy--Schwarz inequality together with the definitions of $\alpha_k$ and $\beta_k$ yields 
$|a_k| \le \alpha_k \|z\| < \alpha_k \mu$ and 
$\|b_k\| \le \beta_k \|z\| < \beta_k\mu$. 
Since $\alpha_k\mu < \|e_k^{\ast}\|$ as stated in Lemma~\ref{lem:local_preservation}, it follows that
$\|e_k^{\ast}\| + a_k \ge \|e_k^{\ast}\| - |a_k| > \|e_k^{\ast}\| - \alpha_k\mu > 0$.
Therefore,  the edge vector 
$e_k = e_k^{\ast} + N_{k} z = (\|e_k^{\ast}\| + a_k) g_k^{\ast} + b_k$ 
is non-zero since 
$\|e_k\|_2^{2} = (\|e_k^{\ast}\| + a_k)^{2} + \|b_k\|_2^{2} \ge (\|e_k^{\ast}\| + a_k)^{2}$.

Let $\theta_k$ denote the angle between $g_k $ and $g_k^{\ast}$. 
Since $b_k \perp g_k^{\ast}$ and $\|e_k^{\ast}\| + a_k > 0$, one has $\sin\theta_k = \|b_k\|/\|e_k\|$, which together with the above lower bound on $\|e_k\|$ yields
\begin{equation}\label{eq:sintheta}
    \sin\theta_k 
    \le \frac{\|b_k\|}{\|e_k^{\ast}\| + a_k} 
    \le \frac{\beta_k\mu}{\|e_k^{\ast}\| - \alpha_k\mu} 
    = \varepsilon_k(\mu).
\end{equation}

For any pair of unit vectors $u, v \in \mathbb{R}^{d}$, one has 
$\|P_u - P_v\|_2 = \sin\angle(u, v)$ \cite[Sec.\ 5.3]{Stewart_Sun_1990}. 
Applied to $g_k$ and $g_k^{\ast}$, this identity and \eqref{eq:sintheta} imply 
\begin{equation}\label{eq:projection-difference-bound}
\|P_{g_k} - P_{g_k^{\ast}}\|_2 \le \varepsilon_k(\mu),
{\,}k = 1,\ldots,m.
\end{equation}

Moreover, it follows from $\mathcal B_{ff}=\bar R^T\operatorname{diag}(P_{g_k})\bar R$ that 
\begin{equation}\label{eq:Bff-diff-decomp}
\begin{aligned}
\mathcal B_{ff} - \mathcal B_{ff}^{\ast}
={}&
(R\otimes I_d)^T
\bigl[
\operatorname{diag}(P_{g_k})
-\operatorname{diag}(P_{g_k^{\ast}})
\bigr]
(R\otimes I_d) \\
={}&
\sum_{k=1}^{m}
\bigl(r_k r_k^{T}\bigr)
\otimes
\bigl(P_{g_k} - P_{g_k^{\ast}}\bigr).
\end{aligned}
\end{equation}

For any $x \in \mathbb{R}^{dn_f}$, define $y_k = (r_k^{T}\otimes I_d)\,x \in \mathbb{R}^{d}$. 
Then \eqref{eq:Bff-diff-decomp} together with \eqref{eq:projection-difference-bound} yields
\begin{equation}\label{eq:quad-bound} 
\begin{aligned} 
\bigl|x^{T}(\mathcal B_{ff} - \mathcal B_{ff}^{\ast})\,x\bigr| 
={}& 
\Bigl| \sum_{k=1}^{m} y_k^{T}(P_{g_k} - P_{g_k^{\ast}})\,y_k \Bigr| \\ 
\le{}& 
\sum_{k=1}^{m} \varepsilon_k(\mu)\,\|y_k\|_2^{\,2} \\ 
={}& 
x^{T}\bigl(L_{\varepsilon}(\mu)\otimes I_d\bigr)\,x . 
\end{aligned} 
\end{equation}

According to \eqref{eq:quad-bound}, one has 
\begin{equation}\label{eq:Bff-lower-bound}
\begin{aligned}
x^{T}\mathcal B_{ff}x
\ge{}&
x^{T}\mathcal B_{ff}^{\ast}x
-
x^{T}\bigl(L_{\varepsilon}(\mu)\otimes I_d\bigr)x  \\
\ge{}&
\bigl(
\lambda_{\min}(\mathcal B_{ff}^{\ast})
-
\lambda_{\max}(L_{\varepsilon}(\mu))
\bigr)\|x\|^{2},
\end{aligned}
\end{equation}
which implies $x^T\mathcal B_{ff}x\ge \underline\lambda(\mu)\|x\|^2, \forall x\in\mathbb R^{dn_f}$ by the definition of \(\underline\lambda(\mu)\) in \eqref{eq:Bff_inequality}. 
Therefore, \(\lambda_{\min}(\mathcal B_{ff}(p))\ge \underline\lambda(\mu)\) for all \(\tilde p_{f}\in\Omega_c\). 

Moreover, for each $k$, the function $\mu\mapsto \varepsilon_k(\mu)$ in~\eqref{eq:def_epsilon} is continuous and $\varepsilon_k(0) = 0$. Hence $L_\varepsilon(\mu)$ is continuous in $\mu$ in any matrix norm, and so is $\lambda_{\max}(L_\varepsilon(\mu))$ by Weyl's continuity of eigenvalues of symmetric matrices. Therefore, $\underline{\lambda}(\mu)$ is continuous in $\mu$ with $\lim_{\mu\to 0^{+}}\underline{\lambda}(\mu) = \lambda_{\min}(\mathcal B_{ff}^{\ast}) > 0$. The existence of $\bar\mu>0$ such that $\underline\lambda(\mu)>0$ on $(0,\bar\mu)$ follows. The proof ends here. \hfill $\blacksquare$

\section{Proof of Proposition~\ref{prop:Omega_invariant}}\label{Appendix:proof_invariant}

Define $s(t) = \tilde p_{f}^{T}(t)\,\mathcal{B}_{ff}^{\ast}\,\tilde p_{f}(t)$ such that $\tilde p_{f}(t)\in\Omega_c \Leftrightarrow s(t)<\mu^{2}$. According to \eqref{eq:edge-perturbation}, one has
\begin{equation}\label{eq:e_k_inequality}
\|e_k(t)\| \leq \|e_k^{\ast}\|+\|N_{k} z(t)\| \leq \|e_k^{\ast}\|+\|N_{k}\|\sqrt{s(t)},
\end{equation}
where the last step comes from $z(t) = (\mathcal{B}_{ff}^{\ast})^{1/2}\,\tilde p_{f}(t)$ and $\|z(t)\|_2^{2} = s(t)$. 
Combining \eqref{eq:e_k_inequality} with Lemma~\ref{lem:ZHAO_positive_lemma}, one has
\begin{equation}\label{eq:V-lower}
\begin{aligned}
V(t)
&\ge k_p\,e^{T}(g-g^{*}) \ge
\frac{k_p\tilde p_{f}^{T}\mathcal{B}_{ff}^{*}\tilde p_{f}}{2\max_k\|e_k\|}\\
&\ge
\frac{
k_p\,s(t)
}{
2\,\max\nolimits_{k}
\bigl(\|e_k^\ast\|+\|N_{k}\|\sqrt{s(t)}\bigr)
}
=: \Phi\bigl(s(t)\bigr).
\end{aligned}
\end{equation}

The map $\Phi(s(t))$ is strictly increasing in $s(t)$ and satisfies $\Phi(s(t)) = \bar V(\mu)$ as $s(t)=\mu^{2}$. Thus, the initial condition $V(0) < \bar V(\mu)$ implies $s(0) < \mu^{2}$, that is, $\tilde p_{f}(0) \in \Omega_c$. 

Suppose, for contradiction, that the trajectory leaves $\Omega_c$. Continuity of $s(\cdot)$ along the continuous Filippov solution ensures the existence of the first exit time $t_{\ast}:= \inf\{t>0:s(t) \ge \mu^{2}\}$ with $s(t_{\ast}) = \mu^{2}$ and $s(t)<\mu^{2}$ on $[0,t_{\ast})$. 
Since $s(t)<\mu^{2}$, i.e., $\tilde p_{f}(t)\in\Omega_c$, on $[0,t_{\ast})$, the conditional inequality \eqref{eq:Vdot-cond} gives
$\dot V\le 0$ almost everywhere on $[0,t_{\ast})$, such that $V$ is non-increasing on $[0,t_{\ast})$.
Since $V$ is absolutely continuous along Filippov solutions, it extends continuously to $t_{\ast}$, and 
$V(t_{\ast})=\lim_{t\to t_{\ast}^-}V(t)\le V(0)<\bar V(\mu)$.
However, substituting $s(t_{\ast}) = \mu^2$ into \eqref{eq:V-lower}, one has $V(t_{\ast}) \ge \Phi(\mu^2) = \bar V(\mu)$. 
This is a contradiction. 
Therefore, $s(t) = \tilde p_{f}^{T}(t)\mathcal{B}_{ff}^{\ast}\,\tilde p_{f}(t)< \mu^2$ for all $t \ge 0$, and thus, the set $\Omega_c$ is positively invariant. 
The proof ends here. \hfill $\blacksquare$

\bibliographystyle{IEEEtran}
\bibliography{reference_update_Auto}

\begin{thebibliography}{10}
\providecommand{\url}[1]{#1}
\csname url@samestyle\endcsname
\providecommand{\newblock}{\relax}
\providecommand{\bibinfo}[2]{#2}
\providecommand{\BIBentrySTDinterwordspacing}{\spaceskip=0pt\relax}
\providecommand{\BIBentryALTinterwordstretchfactor}{4}
\providecommand{\BIBentryALTinterwordspacing}{\spaceskip=\fontdimen2\font plus
\BIBentryALTinterwordstretchfactor\fontdimen3\font minus \fontdimen4\font\relax}
\providecommand{\BIBforeignlanguage}[2]{{%
\expandafter\ifx\csname l@#1\endcsname\relax
\typeout{** WARNING: IEEEtran.bst: No hyphenation pattern has been}%
\typeout{** loaded for the language `#1'. Using the pattern for}%
\typeout{** the default language instead.}%
\else
\language=\csname l@#1\endcsname
\fi
#2}}
\providecommand{\BIBdecl}{\relax}
\BIBdecl

\bibitem{OH_Auto_2015}
K.-K. Oh, M.-C. Park, and H.-S. Ahn, ``A survey of multi-agent formation control,'' \emph{Automatica}, vol.~53, pp. 424--440, 2015.

\bibitem{Su_ARC_2026}
H.~Su, Z.~Yang, S.~Zhu, C.~Chen, X.~Guan, and L.~Xie, ``Bearing-based multi-agent formation control: A survey and taxonomy,'' \emph{Annu. Rev. Control}, vol.~61, p. 101043, 2026.

\bibitem{Dou_IJRNC_2025}
L.~Dou and H.~Zhang, ``Moving-target localization and circumnavigation control for second-order multiagent systems with bearing-only measurements,'' \emph{Int. J. Robust Nonlinear Control}, vol.~35, no.~18, pp. 7767--7778, 2025.

\bibitem{Zhaoshiyu_TAC_2016}
S.~Zhao and D.~Zelazo, ``Bearing rigidity and almost global bearing-only formation stabilization,'' \emph{IEEE Trans. Autom. Control}, vol.~61, no.~5, pp. 1255--1268, May 2016.

\bibitem{ZhaoShiyu_Auto_2016}
------, ``Localizability and distributed protocols for bearing-based network localization in arbitrary dimensions,'' \emph{Automatica}, vol.~69, pp. 334--341, 2016.

\bibitem{TrinhMukherjee_CDC_2017}
M.~H. Trinh, D.~Mukherjee, D.~Zelazo, and H.-S. Ahn, ``Finite-time bearing-only formation control,'' in \emph{Proc. IEEE 56th Annu. Conf. Decis. Control (CDC)}, Melbourne, VIC, Australia, Dec. 2017, pp. 1578--1583.

\bibitem{TrinhZhaoSun_TAC_2019}
M.~H. Trinh, S.~Zhao, Z.~Sun, D.~Zelazo, B.~D.~O. Anderson, and H.-S. Ahn, ``Bearing-based formation control of a group of agents with leader-first follower structure,'' \emph{IEEE Trans. Autom. Control}, vol.~64, no.~2, pp. 598--613, Feb. 2019.

\bibitem{Tran_TCNS_2019}
Q.~V. Tran, M.~H. Trinh, D.~Zelazo, D.~Mukherjee, and H.-S. Ahn, ``Finite-time bearing-only formation control via distributed global orientation estimation,'' \emph{IEEE Trans. Control Netw. Syst.}, vol.~6, no.~2, pp. 702--712, 2019.

\bibitem{ZhaoShiyu_TAC_2019}
S.~Zhao, Z.~Li, and Z.~Ding, ``Bearing-only formation tracking control of multiagent systems,'' \emph{IEEE Trans. Autom. Control}, vol.~64, no.~11, pp. 4541--4554, Nov. 2019.

\bibitem{Trinh_Auto_2021}
M.~H. Trinh, Q.~V. Tran, D.~V. Vu, P.~D. Nguyen, and H.-S. Ahn, ``Robust tracking control of bearing-constrained leader--follower formation,'' \emph{Automatica}, vol. 131, p. 109733, 2021.

\bibitem{Zhaojianing_LCSS_2021}
J.~Zhao, X.~Yu, X.~Li, and H.~Wang, ``Bearing-only formation tracking control of multi-agent systems with local reference frames and constant-velocity leaders,'' \emph{IEEE Control Syst. Lett.}, vol.~5, no.~1, pp. 1--6, Jan. 2021.

\bibitem{Garanayak_TCNS_2025}
C.~Garanayak and D.~Mukherjee, ``Formation control in agents' local coordinate frames for arbitrary initial attitudes,'' \emph{IEEE Trans. Control Netw. Syst.}, vol.~12, no.~1, pp. 993--1005, 2025.

\bibitem{Song_LCSS_2024}
Z.~Song, M.~Xie, and H.~Huang, ``Bearing-only formation tracking control for multi-agent systems with time-varying velocity leaders,'' \emph{IEEE Control Syst. Lett.}, vol.~8, pp. 2027--2032, 2024.

\bibitem{Song_TIE_2026}
Z.~Song, H.~Huang, and G.~Feng, ``Bearing-only collision-free formation tracking control for multi-agent systems: A safety-critical control approach,'' \emph{IEEE Trans. Ind. Electron.}, 2026, in press, doi: 10.1109/TIE.2026.3730129.

\bibitem{ORTEGA_Auto_1989}
R.~Ortega and M.~W. Spong, ``Adaptive motion control of rigid robots: A tutorial,'' \emph{Automatica}, vol.~25, no.~6, pp. 877--888, 1989.

\bibitem{Meng_TAC_2023_EL-Dym}
X.~Meng, J.~Mei, Z.~Miao, A.~Wu, and G.~Ma, ``Fully distributed consensus of multiple {Euler--Lagrange} systems under switching directed graphs using only position measurements,'' \emph{IEEE Trans. Autom. Control}, vol.~69, no.~3, pp. 1781--1788, Mar. 2024.

\bibitem{ZhaobangweiZhaojianingYUxiao_IFAC_2023}
B.~Zhao, J.~Zhao, W.~Lan, and X.~Yu, ``Adaptive bearing-only control of multiple {Euler--Lagrange} systems for static geometric formation,'' \emph{IFAC-PapersOnLine}, vol.~56, no.~2, pp. 10\,632--10\,637, 2023, 22nd IFAC World Congress.

\bibitem{Li-Song-Xie_TCNS_2025}
K.~Li, X.~Fang, Y.~Song, and L.~Xie, ``Singularity-free task-space formation control for manipulators with joint constraints using relative bearing-only feedback,'' \emph{IEEE Trans. Control Netw. Syst.}, vol.~12, no.~4, pp. 2821--2832, 2025.

\bibitem{Li-Song_IJRNC_2025}
K.~Li, K.~Zhao, Z.~Li, and Y.~Song, ``Task-space bearing-only formation control for networked robotic manipulators without velocity measurement,'' \emph{Int. J. Robust Nonlinear Control}, vol.~35, no.~12, pp. 5227--5237, 2025.

\bibitem{Cheng_arXiv_2025}
H.~Cheng, M.~Guay, S.~Wang, and Y.~Che, ``Collision-free bearing-driven formation tracking for {Euler--Lagrange} systems,'' \emph{arXiv preprint arXiv:2508.09908}, 2025.

\bibitem{Hong_Auto_2008_Observer_Intro}
Y.~Hong, G.~Chen, and L.~Bushnell, ``Distributed observers design for leader-following control of multi-agent networks,'' \emph{Automatica}, vol.~44, no.~3, pp. 846--850, 2008.

\bibitem{Shevitz_TAC_1994}
D.~Shevitz and B.~Paden, ``Lyapunov stability theory of nonsmooth systems,'' \emph{IEEE Trans. Autom. Control}, vol.~39, no.~9, pp. 1910--1914, Sep. 1994.

\bibitem{Filippov_1988}
A.~F. Filippov, \emph{Differential Equations with Discontinuous Righthand Sides}.\hskip 1em plus 0.5em minus 0.4em\relax Dordrecht: Springer, 1988.

\bibitem{Khalil_nonlinear_2002}
H.~K. Khalil, \emph{Nonlinear Systems}, 3rd~ed.\hskip 1em plus 0.5em minus 0.4em\relax Upper Saddle River, NJ, USA: Prentice Hall, 2002.

\bibitem{Zhang_Auto_2017_Simulation_Model}
Y.~Zhang, Z.~Deng, and Y.~Hong, ``Distributed optimal coordination for multiple heterogeneous {Euler--Lagrangian} systems,'' \emph{Automatica}, vol.~79, pp. 207--213, 2017.

\bibitem{Stewart_Sun_1990}
G.~W. Stewart and J.-g. Sun, \emph{Matrix Perturbation Theory}.\hskip 1em plus 0.5em minus 0.4em\relax Boston, MA, USA: Academic Press, 1990.

\end{thebibliography}
\vspace{-10mm}

\begin{IEEEbiography}[{\includegraphics[width=1in,height=1.25in,clip,keepaspectratio]{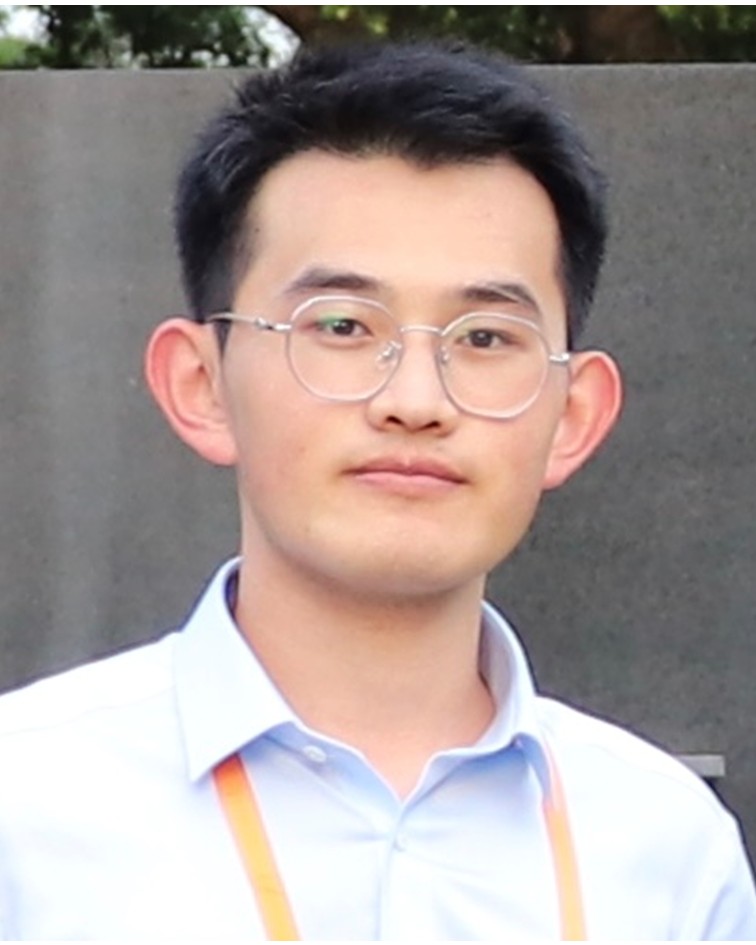}}]{Zilong Song} (Graduate Student Member, IEEE) received B.E. degree in mechanical engineering from Shandong University of Science and Technology, Shandong, China, in 2022, and M.E. degree in mechanical engineering from Zhejiang University, Zhejiang, China, in 2025.

He is currently a Ph.D. student with the Department of Mechanical Engineering, City University of Hong Kong, Hong Kong. His research interests include bearing-only formation tracking and multi-agent systems.
\vspace{-10mm}
\end{IEEEbiography}

\begin{IEEEbiography}[{\includegraphics[width=1in,height=1.25in,clip,keepaspectratio]{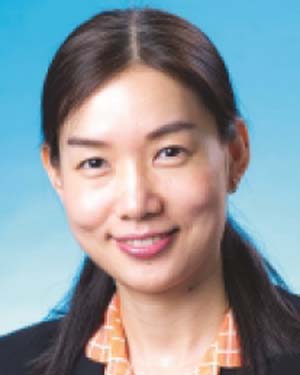}}]{Lu Liu} (Senior Member, IEEE) received the Ph.D. degree in mechanical and automation engineering in 2008 from the Department of Mechanical and Automation Engineering, Chinese University of Hong Kong, Hong Kong.

From 2009 to 2012, she was an Assistant
Professor with the University of Tokyo, Tokyo,
Japan, and then with the University of Nottingham, Nottingham, U.K. After that, she joined
City University of Hong Kong, Hong Kong,
where she is currently a Professor. Her current
research interests include networked dynamical
systems, nonlinear control systems, and multirobot systems.

Dr. Liu is an Associate Editor for \textsc {IEEE Transactions on Fuzzy Systems}, \textsc {IEEE Robotics and Automation Letters}, \textit{Control Theory and Technology}, and \textit{Unmanned Systems}.
\vspace{-10mm}
\end{IEEEbiography}

\begin{IEEEbiography}[{\includegraphics[width=1in,height=1.25in,clip,keepaspectratio]{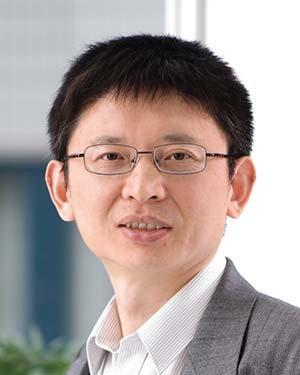}}]{Gang Feng} (Fellow, IEEE) received the B.Eng. and M.Eng. degrees in automatic control from Nanjing Aeronautical Institute, Nanjing, China, in 1982 and 1984, respectively, and the Ph.D. degree in electrical engineering from the University of Melbourne, Melbourne, VIC, Australia, in 1992.

From 1992 to 1999, he was a Lecturer/Senior Lecturer with the School of Electrical Engineering, University of New South Wales, Sydney, NSW, Australia. Since 2000, he has been with the City University of Hong Kong, Hong Kong, where he is a Chair Professor of Mechatronic Engineering. His current research interests include multi-agent systems and control, intelligent systems and control, and networked systems and control. 

Prof. Feng is the recipient of the IEEE Computational Intelligence Society Fuzzy Systems Pioneer Award, the \textsc {IEEE Transactions on Fuzzy Systems} Outstanding Paper Award, the Alexander von Humboldt Fellowship, the Changjiang Chair Professorship from the Education Ministry of China, the City University of Hong Kong Outstanding Research Award. He has been listed as an SCI Highly Cited Researcher by Clarivate Analytics since 2016. He is on the Advisory Board of \textit{Unmanned Systems}. He is/was an Associate Editor for \textsc {IEEE Transactions on Automatic Control}, \textsc {IEEE Transactions on Fuzzy Systems}, \textsc {IEEE Transactions on Systems, Man, and Cybernetics—Part C: Applications and Reviews}, \textit{Mechatronics}, \textit{Journal of Systems Science and Complexity}, and \textit{Journal of Control Theory and Applications}. 
\end{IEEEbiography}

\end{document}